\numberwithin{equation}{section}
\numberwithin{figure}{section}
\numberwithin{table}{section}
\newtheorem{theorem}{Theorem}[section]
\newtheorem{thm}[theorem]{Theorem}
\newtheorem{proposition}[theorem]{Proposition}
\newtheorem{corollary}[theorem]{Corollary}
\theoremstyle{definition}
\newtheorem{example}{Example}[section]
\theoremstyle{remark}
\newtheorem{remark}{Remark}[section]
\newtheorem{implementation goal}{Implementation Goal}
\newcommand{\A}{\mathbb{A}}
\newcommand{\F}{\mathbb{F}}
\newcommand{\FF}{\mathbb{F}} 
\newcommand{\HH}{\mathcal{H}}
\newcommand{\PP}{\mathbb{P}}
\newcommand{\X}{\mathcal{X}}
\newcommand{\Y}{\mathcal{Y}}
\newcommand{\ev}{\operatorname{ev}}
\renewcommand{\ker}{\operatorname{ker}}
\newcommand{\Span}{\operatorname{Span}}
\newcommand{\Tr}{\operatorname{Tr}}
\newcommand{\wt}{\operatorname{wt}}
\title{Minimum Distance and Parameter Ranges of Locally Recoverable Codes with Availability from Fiber Products of Curves}
\author{Mar\'ia Chara, Sam Kottler, Beth Malmskog, \\ Bianca Thompson, and Mckenzie West}
\begin{document}
	\maketitle
	\begin{abstract}    
    We construct families of locally recoverable codes with availability $t\geq 2$ using fiber products of curves, determine the exact minimum distance of many families, and prove a general theorem for minimum distance of such codes.  The paper concludes with an exploration of parameters of codes from these families and the fiber product construction more generally.  We show that fiber product codes can achieve arbitrarily large rate and arbitrarily small relative defect, and compare to known bounds and important constructions from the literature.
	\end{abstract}
  
\section{Introduction}
A code $C$ is broadly said to be locally recoverable if an erased symbol in any position $i$ in a codeword of $C$ can be recovered by consulting a small number of symbols in other (fixed) positions, called a recovery set for position $i$.  Locally recoverable codes have been widely studied in recent years for their potential applications in reliable and efficient cloud storage. For a survey on this topic, see \cite{Balaji2018}.

A natural next property to look for in locally recoverable codes is the ability to recover more than one erasure.  There are two main approaches to this question. First, one could ask that the single recovery set for each position allow for recovery of additional erasures within the set, introducing the parameter $\rho$ to recover $\rho-1$ erasures.  Alternatively, one could ask that each position has multiple (usually disjoint) recovery sets, introducing the parameter $t$ to represent the number of recovery sets that each position has.  Of course, these two approaches can also be blended, producing multiple recovery sets that each can recover multiple erasures.  While this work focuses on the second approach for simplicity, the main construction can be adapted to blend with the first approach.

The Tamo-Barg method \cite{TamoBarg, TamoBargOptimal} of constructing locally recoverable codes is based on building a particular linear space of functions $V$ on an evaluation set $B$. The set $B$ is partitioned into extended recovery sets based on algebraic or geometric relationships between the points in $B$, and functions in $V$ are chosen so that they restrict to polynomials of a single variable of bounded degree on each extended recovery set.  If the value of the function at any point in an extended recovery set is erased, it can be recovered through single variable polynomial interpolation using the values of the function on the other points in the set.  There is a large body of work building on this approach.  In \cite{BTV}, the authors construct locally recoverable codes with availability $t=2$ based on fiber products of curves and propose a group-theoretic perspective on the construction.  In \cite{HMM}, the authors generalize the fiber product construction to $t\geq 2$ and refine the parameters of the resulting codes.  The group-theoretic method of constructing locally recoverable codes with many recovery sets has also been studied, notably in \cite{Bartoli2020}.  The general approach of creating locally recoverable codes from rational maps is pursued in \cite{Munuera2018} and extended to algebraic curves defined by equations with separated variables in \cite{Munuera2020}, but the general fiber product construction still requires more exploration.  

This work is an extension of \cite{HMM}, with a goal of understanding the range of possibilities and limitations of this construction. 
  For completeness, we include the relevant definitions and construction from \cite{HMM}.  In Section \ref{sec:preliminaries}, we include some expository discussion on ways to think of the fiber product of curves and special cases of the construction.  We then introduce the three families of codes which are the main examples of this paper. These three families are all centered on the well-studied Hermitian curve $\HH_q$. The first main example family, introduced in Example \ref{ex:HermitianCurve} comes from the Hermitian curve, introduced as an example of a locally recoverable code with two recovery sets in \cite{BTV}.  The second, Example \ref{ex:HermitianProduct} is a novel code based on the fiber product of two Hermitian curves, and is designed to illustrate the flexibility of this method--one can select curves with appropriate maps and understand the fiber product, and therefore the parameters of the code, using geometry and the construction of \cite{HMM}.  The final example, Example \ref{ex:ASExample}, is a code from a fiber product of Artin-Schreier curves introduced by van der Geer and van der Vlugt. This example was introduced in \cite{HMM} and is included as an example where $t$ can be as large as desired. When the construction is defined over $p^{2h}$, for $p$ a prime and $h$ a natural number, and we choose $t=h$ factor curves, the fiber product is again the Hermitian curve $\HH_{p^t}$.

  In Section \ref{sec:mindist}, we calculate the exact minimum distance for the first family and for a large range of examples in the third family in Theorems \ref{thm:HermitianLRC2Dist} and \ref{thm:ASDistance}.  Incidentally, we also compute the exact minimum distance of a non-fiber product code introduced in \cite{BTV} in Theorem \ref{thm:BTVHqmindist}.
  This is followed by Theorem \ref{thm:MinimumDistance} on minimum distance for codes defined using a fiber product.  We apply Theorem \ref{thm:MinimumDistance} to particular examples in the second and third family in Examples \ref{ex:HermitianDistance2} and \ref{ex:ASExample2}.

  Finally, in Section \ref{SectionBounds}, we explore the parameter space and compare to some relevant bounds and constructions from the literature.  We show in Corollary \ref{cor:RatePC} that fiber product codes are not able to surpass the rate of the product code construction from \cite{TamoBarg}, though some constructions are extremely close. 

%
%

\section*{Acknowledgments}
The authors would like to thank the organizers and participants in the Rethinking Number Theory workshop for creating a lauch pad for part of our research group, and helping us build an inclusive research environment that accommodated all researchers' workloads and backgrounds. We particularly thank early group members Ernest Guico, Darleen Perez-Lavin, and Alexader Diaz-Lopez for their valuable contributions to our understanding. Author Kottler was supported by the Summer Collaborative Research Experience award from Colorado College.  This material is based upon work supported by the National Science Foundation under Grant No. 2137661.

\section{Preliminaries}\label{sec:preliminaries}
  \subsection{Locally Recoverable Codes and Availability}

    Let $n,k$ be natural numbers, with $k\leq n$.
    A \textit{linear code} $C$ of \textit{length} $n$ and \textit{dimension} $k$ over the field $\mathbb{F}_q$ is a $k$-dimensional linear subspace of $(\mathbb{F}_q)^n$.
    The \textit{minimum distance} of $C$ is the minimum number, $d\leq n$, of coordinates in which two distinct elements of $C$ (referred to as \textit{codewords}) must differ.
    The \emph{weight} of a codeword is the number of non-zero coordinates it has, for the codeword $c$, we denote this value by $\wt(c)$.
    As a vector space, the minimum distance of $C$ is equal to the minimum weight of the non-zero codewords.
    It is common to refer to such codes as $[n,k,d]$-codes.
    
    For an $[n,k,d]$-code, the \textit{rate} of the code is  $R=\frac{k}{n}$.  The \textit{relative minimum distance} is given by $\frac{d}{n}$.  When a Singleton-type upper bound $b$ on minimum distance is known, we define the \textit{defect} of the code to be $b-d$ and the \textit{relative defect} of the code to be $\frac{b-d}{n}$.
    
    We say $C$  is a \textit{locally recoverable code (LRC) with locality $r$} if for all $i\in \{1,\dots,n\}$ there exists a set of indices $A_i\subseteq\{1,\dots,n\}\setminus\{i\}$ and a function $\phi_i\colon(\mathbb{F}_q)^r\to \mathbb{F}_q$ such that $\#A_i = r$ and for all codewords $c=(c_1,\dots,c_n)\in C$ we have $c_i =  \phi_i(c|_{A_i})$.
    The set $A_i$ is called the \textit{recovery set} for the $i$-th position.
    It may be desirable to have multiple disjoint recovery sets for each position to protect against multiple erasures or allow for simultaneous queries of heavily-accessed information.
    A locally recoverable code $C$ has \textit{availability} $t$ with locality $(r_1, \dots, r_t)$ if for each $i\in \{1,\dots,n\}$ there exists sets of indices $A_{i,1},\dots,A_{i,t}\subseteq\{1,\dots,n\}\backslash\{i\}$ such that
    \begin{enumerate}
        \item $A_{i,j}\cap A_{i,h} = \emptyset$ for $j\neq h$
        \item $\#A_{i,j} = r_j$
        \item For each $j\in \{1,\dots,t\}$ there exists a function $\phi_{i,j}\colon\F_q^{r_j}\to \F_q$ 
              such that for all codewords $c=(c_1,\dots,c_n)\in C$ we have $c_i = \phi_{i,j}(c|_{A_{i,j}})$.
    \end{enumerate}
    We refer to an LRC with availability $t$ as an LRC($t$). The localities of an LRC($t$) form a vector $(r_1,r_2,\dots, r_t)$.  When $r_i=r_j=r$ for all $i,j\in\{1,2,\dots, t\}$, we say that the code has \textit{uniform locality} $r$.

  \subsection{Evaluation Codes on Curves}

    Let $\X$ be an algebraic variety defined over a finite field $\mathbb{F}_q$. 
    Let $B$ be a subset of $\X(\F_q)$ of cardinality $n\in\mathbb{N}$, with points arbitrarily ordered as $B=\{P_1, P_2,\dots, P_n\}$. 
    Let $V$ be a linear subspace of the function field $\F_q(\X)$ such that no function in $V$ has poles at any point in $B$. 
    For any $f\in V$, define the evaluation map 
    \[ev_B\colon V\rightarrow \F_q^n, \hspace{.25in} f\mapsto (f(P_1), f(P_2),\dots ,f(P_n)).\] 
    Then we define the evaluation code $C(V,B)$ as
    \[ C(V,B)\coloneqq\{ev_B(f):f\in V\}.\]

    Reed-Solomon codes are evaluation codes where $V$ is the space of polynomials of bounded degree and $B$ are the values in a finite field, viewed as affine points on a projective line.
    Evaluation codes on the Hermitian curve have also been very well-studied.
    For any prime power $q$, the Hermitian curve $\HH_q$ is defined over any extension of $\F_q$ by the affine equation
      \[ x^q + x= y^{q+1}. \]
    The curve $\HH_q$ has genus $\frac{1}{2}q(q-1)$ and has $q^3+1$ points over $\F_{q^2}$ including a single point at infinity.

  \subsection{Fiber Products of Curves}\label{sub:FiberProduct}

    Let $\Y_1$, $\Y_2$, and $\Y$ be projective curves over $\F_q$, with maps $h_i\colon \Y_i\rightarrow \Y$ that are separable, rational $\F_q$-morphisms for $i=1,2$.
    The fiber product $\Y_1\times_{\Y}\Y_2$ is a curve that is (abstractly) defined using the corresponding fiber product of schemes.
    More concretely, the $\F_q$-rational points of the fiber product $\Y_1\times_{\Y}\Y_2$ are given by
      \[(\Y_1\times_{\Y}\Y_2)(\F_q)=\{(P_1,P_2)\in \Y_1(\F_q)\times \Y_2(\F_q) : h_1(P_1)=h_2(P_2)\}.\]

    The fiber product construction can be iterated and is seen to be (up to isomorphism) associative and commutative.
    Thus for any $t\in\mathbb{N}$, we may without confusion construct the $t$-fold fiber product of curves as follows.
    Let $\Y,\Y_1,\dots,\Y_t$ be projective curves over $\F_q$ with separable $\F_q$-rational maps $h_j:\Y_j\to \Y$.
    The $\F_q$-points of the fiber product $\X=\Y_1\times_{\Y}\dots\times_{\Y} \Y_t$ of $\Y_1,\dots,\Y_t$ over $\Y$ are then given by
      \[\X(\F_q)= \{(P_1,\dots,P_t): P_i\in\Y_i(\F_q) \text{ and } h_i(P_i) = h_j(P_j) \text{ for all } i,j \in\{1,\dots,t\}\}.\] A simple visualization of a  2-fold fiber product construction is given in Figure \ref{fig:FP visual}.
      
          \begin{figure}[h!]
        \centering
        \includegraphics[scale = 0.3]{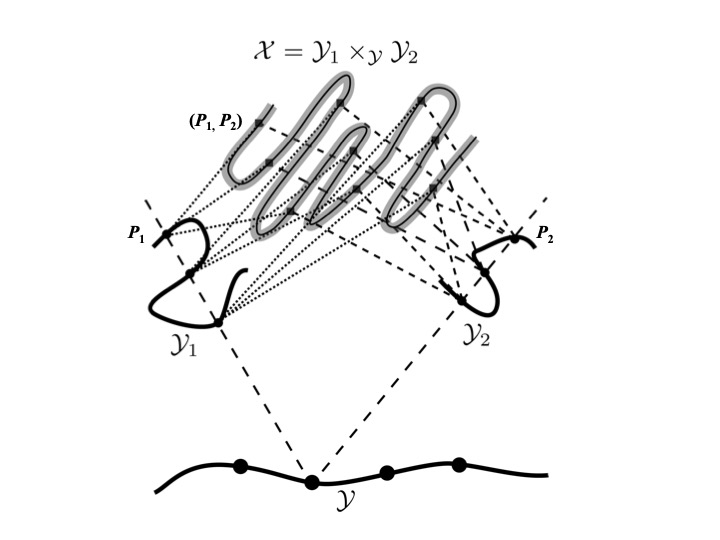}
        \caption{A visualization of the points on a fiber product $\X$ lying above a single point on the base curve $\Y$.}
        \label{fig:FP visual}
    \end{figure}
    This construction induces $t$ natural projection maps
      \[g_i\colon\X\rightarrow \Y_i\] 
    from the fiber product onto each factor curve.
    Let
      \[\tilde{\Y_i}=\Y_1\times_{\Y}\dots \times_{\Y}\Y_{i-1}\times_{\Y}\Y_{i+1}\times_{\Y} \dots\times_{\Y}\Y_t\]
    be the fiber product of all curves $\Y_j$ except $\Y_i$.
    Then we see that $\X$ is isomorphic to $\Y_i\times_{\Y}\tilde{\Y}_i$, and we identify $\tilde{\Y}_i$ with the isomorphic factor in the original fiber product construction of $\X$.
    This gives complementary projection maps
      \[\tilde{g}_i\colon\X\rightarrow \tilde{\Y}_i\quad\text{and}\quad\tilde{h}_i\colon\tilde{\Y}_i\rightarrow \Y.\]  
    We also define the map $g\colon \X\rightarrow \Y$ by $g=h_i\circ g_i$ for any $i$.

    \begin{remark}
        Simply speaking, the map $\tilde{g}_i$ ``forgets" the information coming from the curve $\Y_i$ while retaining the data of the fiber product that come from the other curves.
    \end{remark}

    The function field $\F_q(\X)$ is isomorphic to the compositum of the function fields $\F_q(\Y_i)$, where the function field $\F_q(\Y)$ is embedded into each $\F_q(\Y_i)$ as induced by the map $h_i$.
    For ease of exposition, we identify each function field with its image inside $\F_q(\X)$, so $\F_q(\Y)\subseteq \F_q(\Y_i)\subseteq \F_q(\X)$ for each $i$.
    Further, we assume that $\F_q$ is the full field of constants within each of these fields, and that
    \begin{equation}\label{eqn:trivial_intersection}
      \F_q(\mathcal{Y})=\bigcap_{i=1}^t \F_q(\mathcal{Y}_i).
      \end{equation}\label{rmk:linearly_disjoint}
\begin{remark}If \ref{eqn:trivial_intersection} holds and either each extension $\F_q(\Y_i)/\F_q(\Y)$ is Galois, or the degrees $d_{h_i}$ are pair-wise relatively prime, then these extensions are linearly disjoint and the degree of $\F_q(\X)/\F_q(Y)$ is the product of the degrees $d_{h_i}$, i.e. $d_{g}=\prod_{i=1}^td_{h_i}$. In this case, we have $d_{g_j}=(\prod_{i=1}^td_{h_i})/d_{h_j}$.  \end{remark}




  \subsection{Locally Recoverable Code with Availability $t$ Construction from Fiber Product}\label{sec:CodeConstruction}

    The following general construction comes from \cite{HMM}, though for completeness we include it simplified notation here. 
    Let $y_0\in\F_q(\X)$ so that $\F_q(\Y)=\F_q(y_0)$. For each $i$, $1\leq i \leq t$, we choose $y_i \in \F_q(\X)$ so that $\F_q(\Y_i)=\F_q(\mathcal{Y})(y_i)$, where $y_i$ is the root of an irreducible separable polynomial $b_i(X) \in \F_q(\Y)[X]$. Let $d_{y_i}$ be the degree of the function $y_i\colon\mathcal{X}\rightarrow \mathbb{P}^1_{y_i}$.

    We now have that
    \[\F_q(\mathcal{X})=\F_q(y_0)(y_1)\cdots(y_t)=\F_q(y_0,y_1,y_2,\dots,y_t).\]
    The degree of $\tilde{g}_i$ must be equal to the degree of $h_i$, denoted $d_{h_i}$.

    Now, choose $S\subset \mathcal{Y}(\F_q)$ such that
    \begin{itemize}
    \item $|g^{-1}(P)\cap\mathcal{X}(\F_q)|=d_g$ for all $P\in S$ (i.e. all places in $S$ split completely in the extension $\F_q(\mathcal{X})/\F_q(\mathcal{Y})$) and
    \item for each $i$, $1\leq i\leq t$, the function $y_i$ has no poles at any point above $S$ in the extension $\F_q(\mathcal{X})/\F_q(\mathcal{Y})$.
    \end{itemize}

    Choose an effective divisor $D$ of degree $l$ on $\mathcal{Y}(\F_q)$ with $S \cap \mathrm{supp} (D) = \emptyset$, so functions in the Riemann-Roch  space $\mathcal{L}(D)$ have no poles in $S$.  Let $\{f_1, f_2, \dots, f_m\}$ be a basis of the Riemann-Roch space $\mathcal{L}(D)$.
    We require that $l< |S|$ so that for all $f\in\mathcal{L}(D)$, there exists some $P\in S$ with $f(P)\neq 0$. Let $V$ be the $\F_q$-vector space with basis
      \begin{equation}\label{eqn:Vbasis}
        \{f_j y_1^{{e_1}}\cdots {y_t}^{e_t}: 1\leq j\leq m, 0\leq e_i\leq d_{h_i}-2\textrm{ for all } i\}. 
      \end{equation}
    Then set 
    \begin{equation}\label{eqn:Bdef}
    B=g^{-1}(S)\subset \X(\F_q),
    \end{equation}
    where an arbitrary ordering of elements is fixed on $B$. Note that $n=|B|=d_g|S|$.  

    The code $C(V,B)$ is locally recoverable with availability $t$.
    Recall that we have fixed an ordering of the points in $B$ for the evaluation map $ev_B$.
    For any $i,j\in \mathbb{N}$ with $1\leq i\leq n$ and $1\leq j\leq t$, set
      \[B_{i,j}=\tilde{g}_j^{-1}(\tilde{g}_j(P_i))\setminus \{ P_i \}.\] 
    Let
    \[A_{i,j}=\{a\colon P_a \in B_{i,j}\}.\] 
    Consider a codeword $ev_B(f)$ for some function $f\in V$.
    Given an erasure in position $i$ of the codeword (associated with point $P_i$), each $A_{i,j}$
    acts as a recovery set, because on the set $B_{i,j}$ the function $f$ is constant except in $y_j$, so on $B_{i,j}$ it acts as $\tilde{f}(y_j)$, a polynomial of degree less than or equal to $d_{h_j}-2$.
    The evaluation of $f$ on the $d_{h_j}-1$ points of $B_{i,j}$ therefore give rise to $d_{h_j}-1$ distinct pairs $\left(y_j(P_i), \tilde{f}\left(y_j(P_i)\right)\right)$.
    Since any polynomial of this degree is determined by its values on $d_{h_j}-1$ points, these pairs are sufficient to determine the value of $\tilde{f}(P_i)=f(P_i)$.

    This construction gives rise to the following theorem.
    \begin{theorem} \label{thm:construction}
        Given a fiber product $\X$ of curves defined over $\FF_q$ as described in Section \ref{sub:FiberProduct}, with $V$ a vector space of functions on $\X$ with basis as in \eqref{eqn:Vbasis} and $B$ a subset of $\X(\FF_q)$ as in \eqref{eqn:Bdef}, the code $C(V,B)$ is a locally recoverable code with availability $t$ and 
        \begin{itemize}
          \item length $n=|B|$,
          \item dimension $m(d_{h_1}-1)(d_{h_2}-1) \cdots (d_{h_t}-1)$,
          \item minimum distance $d \geq n-ld_g-\sum_{i=1}^t\left( d_{h_i}-2\right)d_{y_i}$, and
          \item locality $(d_{h_1}-1, d_{h_2}-1, \dots,d_{h_t}-1)$.
        \end{itemize}
    \end{theorem}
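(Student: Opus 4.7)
The plan is to verify the four claimed properties in turn: length is immediate; locality follows from the split-completely hypothesis on $S$ together with the interpolation argument in the paragraph preceding the theorem; dimension reduces to linear independence of the spanning set together with injectivity of $ev_B$; and the minimum distance bound is obtained by controlling the degree of the pole divisor of any $f\in V$ on $\X$. The main obstacle is the last point, the divisorial pole bound.

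Length and locality are essentially bookkeeping. By definition $n=|B|=d_g|S|$. For locality, the split-completely condition on $S$ guarantees that for any $P_i\in B$, the fiber $\tilde g_j^{-1}(\tilde g_j(P_i))$ contains exactly $d_{h_j}$ points in $\X(\F_q)$ (one for each preimage in $\Y_j$ of the image in $\Y$), so $|B_{i,j}|=d_{h_j}-1$. The recovery argument sketched before the theorem---restricting a basis element $f_j y_1^{e_1}\cdots y_t^{e_t}$ to $B_{i,j}$ produces a single-variable polynomial in $y_j$ of degree at most $d_{h_j}-2$, which is determined by its values at $d_{h_j}-1$ points---shows that each $A_{i,j}$ is indeed a recovery set. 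Disjointness across $j$ follows because the only point common to the fibers $\tilde g_j^{-1}(\tilde g_j(P_i))$ over all $j$ is $P_i$ itself.

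For dimension, I would show two things. First, the spanning set \eqref{eqn:Vbasis} is $\F_q$-linearly independent: by the discussion in Remark \ref{rmk:linearly_disjoint} together with \eqref{eqn:trivial_intersection}, the extensions $\F_q(\Y_i)/\F_q(\Y)$ are linearly disjoint, so the monomials $\{y_1^{e_1}\cdots y_t^{e_t}: 0\le e_i\le d_{h_i}-1\}$ form an $\F_q(\Y)$-basis of $\F_q(\X)$; multiplying by the $\F_q$-linearly independent $f_j\in\mathcal{L}(D)\subseteq\F_q(\Y)$ and restricting to exponents $e_i\le d_{h_i}-2$ keeps linear independence, giving $m\prod_i(d_{h_i}-1)$ independent elements. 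Second, $ev_B$ is injective: this I would defer to the minimum distance bound, since if any nonzero $f\in V$ has at most $N:=ld_g+\sum(d_{h_i}-2)d_{y_i}$ zeros on $\X$, and $N<n$, then $ev_B(f)\ne 0$.

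For the minimum distance, the key step is to bound the pole divisor $(f)_\infty$ on $\X$ for a nonzero $f\in V$. The contributions decouple divisorially. The pullback $g^*f_j$ has pole divisor bounded by $g^*D$, whose degree is $l\cdot d_g$ since $\deg g=d_g$. Viewing $y_i$ as a morphism $\X\to\mathbb{P}^1$, its pole divisor $(y_i)_\infty$ on $\X$ has degree $d_{y_i}$, so $y_i^{e_i}$ contributes a pole divisor of degree at most $(d_{h_i}-2)d_{y_i}$. Taking the divisorial maximum over the basis expansion of $f$, we get
\[
  (f)_\infty \;\le\; g^*D + \sum_{i=1}^t (d_{h_i}-2)(y_i)_\infty,
\]
which has degree at most $ld_g+\sum_{i=1}^t(d_{h_i}-2)d_{y_i}$. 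Since on a curve the number of zeros of a nonzero rational function equals the degree of its pole divisor, $f$ has at most this many zeros on $\X$, and therefore at most this many zero coordinates in $ev_B(f)$. The weight of $ev_B(f)$ is thus at least $n-ld_g-\sum_{i=1}^t(d_{h_i}-2)d_{y_i}$, which is the claimed bound and simultaneously settles the injectivity of $ev_B$ whenever this quantity is positive. The genuinely delicate point to check in a full write-up is that the divisorial bounds on $g^*f_j$ and on the $y_i^{e_i}$ really do add: this relies on the assumption that the $y_i$ have no poles above $S$, so that the supports of $g^*D$ and $(y_i)_\infty$ relate cleanly to the geometry of the evaluation set $B$.
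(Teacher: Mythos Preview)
Your argument is correct and follows the standard route: the paper itself only writes out the locality/interpolation step (which you reproduce) and attributes the remaining parameters to \cite{HMM}, where the minimum distance bound is obtained exactly as you propose, by noting that $V\subseteq\mathcal{L}\bigl(g^*D+\sum_{i=1}^t(d_{h_i}-2)(y_i)_\infty\bigr)$ on $\X$ and bounding the number of zeros of any nonzero $f\in V$ by the degree of this divisor.

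One small correction to your closing caveat: the inequality $(f)_\infty\le g^*D+\sum_i(d_{h_i}-2)(y_i)_\infty$ holds unconditionally, simply because each basis monomial lies in the Riemann--Roch space of the right-hand divisor and that space is closed under $\F_q$-linear combinations---no hypothesis on supports is needed for the pole bounds to ``add''. The assumption that the $y_i$ have no poles above $S$ serves a different purpose: it ensures every $f\in V$ is regular on $B$, so that $ev_B$ is actually defined.
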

    
One may easily calculate the rate $R$ of the constructed code.  In the case that the extensions are linearly disjoint, we have an especially simple form.
\begin{corollary}\label{cor:rate}
If the extensions $\FF_q(\Y_i)/\FF_q(\Y)$ are linearly disjoint, then the rate of $C(V,B)$ is \begin{equation}\label{eqn:LinDisRate}
R=\frac{m}{\left\vert S\right \vert}\prod_{i=1}^t\frac{d_{h_i}-1}{d_{h_i}}.
\end{equation}
\end{corollary}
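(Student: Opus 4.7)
The plan is to directly compute $R = k/n$ by plugging in the dimension and length given in Theorem \ref{thm:construction} and then invoking the linear disjointness hypothesis to evaluate $d_g$.

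First I would recall that by Theorem \ref{thm:construction}, the code $C(V,B)$ has length $n = |B| = d_g\,|S|$ and dimension $k = m\prod_{i=1}^t (d_{h_i}-1)$, so by definition
\[
R = \frac{k}{n} = \frac{m\prod_{i=1}^t (d_{h_i}-1)}{d_g\,|S|}.
\]
Next I would apply the linear disjointness assumption: since the extensions $\FF_q(\Y_i)/\FF_q(\Y)$ are linearly disjoint, the remark following equation \eqref{eqn:trivial_intersection} gives $d_g = \prod_{i=1}^t d_{h_i}$. Substituting this into the denominator and regrouping factor-by-factor yields
\[
R = \frac{m}{|S|}\prod_{i=1}^t \frac{d_{h_i}-1}{d_{h_i}},
\]
which is the claimed expression.

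There is essentially no obstacle here beyond the bookkeeping, as the corollary follows from substitution into the formulas already established in Theorem \ref{thm:construction} together with the degree multiplicativity of linearly disjoint extensions. The only point worth emphasizing in the write-up is the justification of $d_g = \prod_{i=1}^t d_{h_i}$, since this is precisely where the linear disjointness hypothesis enters; without it, $d_g$ could be strictly smaller than the product of the $d_{h_i}$, and the rate formula would need to be written in terms of $d_g$ directly rather than the individual degrees.
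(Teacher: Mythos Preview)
Your proposal is correct and matches the paper's own justification essentially verbatim: the paper states that the corollary is ``a simple application of the definition of rate, the fact that $|B|=d_g|S|$, and the fact that when the extensions are linearly disjoint, we have $d_g=\prod_{i=1}^t d_{h_i}$,'' which is exactly the substitution you carry out.
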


This is a simple application of the definition of rate, the fact that $\left\vert B\right\vert=d_g\left\vert S\right\vert$, and the fact that when the extensions are linearly disjoint, we have $d_{g}=\prod_{i=1}^td_{h_i}$.

\section{Simplified Framework and Featured Constructions}\label{simple example} 
  To gain some intuition, let us consider the simplest version of this fiber product construction: say $\mathcal{Y}=\mathbb{P}^1_{y_0}$ with $\infty_{\Y}$ the unique point at infinity on this curve,
  and $h_i\colon \mathcal{Y}_i\rightarrow \mathcal{Y}$ given by projection onto $y_0$.
  In this case, the fiber product 
    $\mathcal{X}=\mathcal{Y}_1\times_{\mathcal{Y}}\dots \times_{\mathcal{Y}}\mathcal{Y}_t$
  can be embedded into $\mathbb{P}^{t+1}$, with affine coordinates $(y_0, y_1, \dots, y_t)$.
  Note that this fiber product, $\mathcal{X}$, is isomorphic to the intersection of $t$ hypersurfaces in $(t+1)$-dimensional space.
  Further, if we take $D=l\infty_{\Y}$ to be the divisor defining the Riemann-Roch space $\mathcal{L}(D)$, then this fiber product construction results in a punctured subcode of the Reed-Muller code, with functions simply polynomials in $\mathbb{F}_q[y_0,y_1,\dots, y_t]$ and evaluation points a subset of points on the intersection of the $t$-hypersurfaces created by considering the defining equations for the $t$ curves $\Y_i$ in $\PP^{t+1}$.
  Explicitly, the functions leading to codewords are
    \[V=\textrm{Span}\{y_0^jy_1^{e_1}y_2^{e_2}\cdots y_t^{e_t}:0\leq j\leq l, 0\leq e_i\leq d_{h_i}-2\}.\]
  General fiber product codes should be viewed as generalizations of these simple codes.

  Let $P=(\alpha,\beta_1,\dots, \beta_t)$ be an evaluation point of such a simple fiber product code, where $\alpha,\beta_1,\dots, \beta_t\in\mathbb{F}_q$.
  The $i$-th recovery set for $P$ is the set of all evaluation points 
    $Q=(\alpha, \beta_1,\dots, \beta_{i-1}, \gamma, \beta_{i+1}, \dots, \beta_t)$, 
  where $\gamma\in\mathbb{F}_q$.  That is, the $i$-th recovery set is simply the set of all evaluation points which share all coordinate values but that of $y_i$ with $P$.

  We now introduce three important examples of fiber product codes within this simplified framework.

  \begin{example}{\textbf{LRC(2)s on the Hermitian Curve Viewed as a Fiber Product}}\label{ex:HermitianCurve}
    As a first concrete example, we consider the Hermitian curve $\HH_q$ as a fiber product and intersection.
    Let $\Y=\PP^1$, $\Y_1\colon u=y^{q+1}$, and $\Y_2\colon u=x^q+x$, and let $h_i\colon \Y_i\to\PP^1$ be projection onto $u$ for $i=1,2$.
    Then the fiber product $\X=\Y_1\times_{\Y}\Y_2$ is isomorphic to the curve $H_q\colon x^q+x=y^{q+1}$.
    Indeed, the affine points of $\X(\F_q)$ are given by
    \[\{((y,u),(x,u)):x,y,u\in \F_q, y^{q+1}=u=x^q+x\} \subseteq \PP^2\times\PP^2.\]
    Hence this is isomorphic by the natural map to the intersection of the two hypersurfaces in $\PP^3$ with affine equations $u=x^q+x$ and $u=y^{q+1}$, and also to the curve $\HH_q$ defined in $\PP^2$ by affine equation $y^{q+1}=x^q+x$.
    The utility of the fiber product viewpoint on this curve is to highlight two natural maps which give rise to recovery sets.
    Codes using the fiber product construction of $\HH_q$ are developed in \cite{BTV}, where a lower bound is given on the minimum distance.
    Let $C_{\mathcal{H}_q}$ be the LRC($2$) presented in Proposition 5.1 of \cite{BTV}.
    For this code, we take the curve $\HH_q$ with evaluation set $B_{\mathcal{H}_q}=\{P\in\HH_q(\F_{q^2})\colon y(P)\neq 0 \}.$
    We can check that $|B_{\mathcal{H}_q}|=q^3-q$.
    Then we let $V_{\mathcal{H}_q}$ be the space of functions with basis $\{x^iy^j\colon 0\leq i \leq q-2, 0\leq j\leq q-1\}.$
    The code $C_{\mathcal{H}_q}=C(V_{\mathcal{H}_q}, B_{\mathcal{H}_q})$ is an LRC($2$) where the two recovery sets for a point $P\in B_{\mathcal{H}_q}$ are given by the points $Q \in B_{\mathcal{H}_q}$, $Q\neq P$ sharing the same $x$-coordinate $P$ and those sharing the same $y$-coordinate value as $P$.
    These recovery sets are of size $q-1$ and $q$, respectively.

    In \cite{BTV}, the authors prove the following.
    \begin{thm} (\cite{BTV})
      The code $C_{\mathcal{H}_q}$ has length $n=(q^2-1)q,$ dimension $k = (q-1)q,$ and minimum distance $$d\geq (q+1)(q^2-3q+3) = q^3 - 2q^2 +3.$$
    \end{thm}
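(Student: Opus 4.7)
The plan is to realize $C_{\HH_q}$ as a concrete instance of the fiber product construction worked out in Example \ref{ex:HermitianCurve} and then invoke Theorem \ref{thm:construction} to extract all three parameters at once. Recall from that example that $\HH_q = \Y_1 \times_{\PP^1_u} \Y_2$, with $\Y_1, \Y_2 \cong \PP^1$ under the maps $h_1(y) = y^{q+1}$ and $h_2(x) = x^q + x$ of degrees $q+1$ and $q$ respectively. Because $\gcd(q+1, q) = 1$ and each extension of $\F_{q^2}(u)$ is Galois, the remark following \eqref{eqn:trivial_intersection} yields $d_g = q(q+1)$.

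First I would compute the completely split locus $S \subset \PP^1_u(\F_{q^2})$. The map $y \mapsto y^{q+1}$ is the norm $\F_{q^2}^* \to \F_q^*$, surjective with fibers of size $q+1$, and $x \mapsto x^q + x$ is the trace $\F_{q^2} \to \F_q$, surjective with fibers of size $q$; intersecting, a value $u$ splits completely in both precisely when $u \in \F_q^*$, giving $|S| = q-1$ and $n = d_g |S| = q(q+1)(q-1) = q^3 - q$ (equivalently, remove the point at infinity and the $q$ affine points with $y = 0$ from the $q^3 + 1$ rational points of $\HH_q$). For the dimension, take $D = 0$ on $\PP^1_u$ so that $l = 0$, $m = 1$; the basis \eqref{eqn:Vbasis} then reduces to exactly $\{y^{e_1} x^{e_2} : 0 \leq e_1 \leq q-1,\ 0 \leq e_2 \leq q-2\} = V_{\HH_q}$. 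These are linearly independent on $B_{\HH_q}$ because the pole orders $\ord_\infty(x^i y^j) = -(i(q+1) + jq)$ are pairwise distinct on the given index range (using $\gcd(q, q+1) = 1$), so any non-trivial combination has a unique pole of order at most $2q^2 - 2q - 2 < n$ at infinity and cannot vanish identically on $B_{\HH_q}$.

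For the minimum distance, I would plug directly into the bound of Theorem \ref{thm:construction}. Using $d_{y_1} = q$ and $d_{y_2} = q+1$ (the pole orders at infinity of $y$ and $x$ on $\HH_q$), the theorem gives
\[
d \;\geq\; n - l\, d_g - (d_{h_1} - 2) d_{y_1} - (d_{h_2} - 2) d_{y_2} \;=\; (q^3 - q) - (q-1) q - (q-2)(q+1) \;=\; q^3 - 2q^2 + q + 2.
\]
Since $q^3 - 2q^2 + q + 2 \geq q^3 - 2q^2 + 3 = (q+1)(q^2 - 3q + 3)$ for every $q \geq 2$, the claimed bound follows (in fact, Theorem \ref{thm:construction} gives a strictly stronger estimate by $q - 1$). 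There is no serious obstacle; the only care needed is in pinning down $S$ and the pole orders $d_{y_i}$, both of which are immediate from the geometry of $\HH_q$.
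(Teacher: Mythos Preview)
Your approach is correct, and in fact it is exactly what the paper does---but not for this theorem. The stated theorem is quoted from \cite{BTV} without proof; the paper does not reprove it. Instead, the paper immediately follows with a Proposition giving the sharper bound $d\geq q^3-2q^2+q+2$, and the proof of that Proposition is precisely your argument: recognize $B_{\HH_q}$ as the fully-split locus over $S\subset\PP^1_u$, take $D=0$ so $l=0$, and apply Theorem~\ref{thm:construction}. So you have not reproduced the original \cite{BTV} argument for the weaker bound, but you have (independently) recovered the paper's own improvement, including the observation that the fiber product bound beats the cited one by $q-1$.
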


    Applying the viewpoint of \cite{HMM}, we are able to tighten this bound. 

    \begin{proposition}
    The code $C_{\mathcal{H}_q}$ has minimum distance $d$ satisfying
    \[d\geq q^3-2q^2+q+2.\]
      \begin{proof}
        First, we note that we may consider the Hermitian curve given as a fiber product as described above.
        Then $B_{\mathcal{H}_q}$ is the set of all points of $\X=\HH_q(\F_{q^2})$ lying above points of $\Y=\PP^1_u$ that split completely in the extension $\F_{q^2}(\X)/\F_{q^2}(\Y)$.
        We obtain $V_{\mathcal{H}_q}$ by letting $D$ be the zero divisor, so $l=0$.
        Applying Theorem \ref{thm:construction}, we find that the minimum distance $d$ of $C_{\mathcal{H}_q}$ is in fact bounded by 
          \[d\geq q^3-2q^2+q+2.\]
      \end{proof}
    \end{proposition}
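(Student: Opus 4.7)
The plan is to apply Theorem~\ref{thm:construction} to the fiber-product presentation of $\HH_q$ given immediately above the proposition, taking $D$ to be the zero divisor (so $l=0$) so that the bound reads $d \geq n - \sum_{i=1}^{2}(d_{h_i}-2)\, d_{y_i}$. My task then reduces to identifying each piece of the right-hand side and verifying that the basis of $V_{\mathcal{H}_q}$ described in the example really matches the prescribed basis \eqref{eqn:Vbasis}.

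First I would pin down the combinatorial data. The defining maps $h_1\colon y^{q+1}=u$ and $h_2\colon x^q+x=u$ have degrees $d_{h_1}=q+1$ and $d_{h_2}=q$; since $\gcd(q,q+1)=1$, the associated extensions of $\F_{q^2}(\Y)$ are linearly disjoint and $d_g=q(q+1)$. To identify $S\subset \PP^1_u(\F_{q^2})$, I would use that $u=y^{q+1}$ is the $\F_{q^2}/\F_q$-norm, whose image is $\F_q$ with $q+1$ preimages over each nonzero value, while $u=x^q+x$ is the $\F_{q^2}/\F_q$-trace, whose image is $\F_q$ with $q$ preimages over each value. Thus a point $u\in \F_{q^2}$ splits completely in $\X/\Y$ if and only if $u\in\F_q^*$, giving $|S|=q-1$ and $n=d_g|S|=q(q+1)(q-1)=q^3-q$, in agreement with $|B_{\mathcal{H}_q}|$.

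The least mechanical step is computing $d_{y_1}=\deg(y)$ and $d_{y_2}=\deg(x)$ as maps $\HH_q\to\PP^1$. For this I would use the standard facts that $y$ satisfies the irreducible polynomial $Y^{q+1}-(x^q+x)$ over $\F_{q^2}(x)$, so $[\F_{q^2}(\HH_q):\F_{q^2}(x)]=q+1$, and symmetrically $[\F_{q^2}(\HH_q):\F_{q^2}(y)]=q$. Hence $d_{y_1}=q$ and $d_{y_2}=q+1$. Finally, with $l=0$ the Riemann--Roch space $\mathcal L(D)$ is spanned by the constant function, and the exponent ranges $0\le e_1\le q-1=d_{h_1}-2$ for $y$ and $0\le e_2\le q-2=d_{h_2}-2$ for $x$ in \eqref{eqn:Vbasis} coincide with the stated basis of $V_{\mathcal{H}_q}$. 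Plugging into Theorem~\ref{thm:construction} gives
\[
  d \;\geq\; (q^3-q) - \bigl[(q-1)\,q + (q-2)(q+1)\bigr] \;=\; q^3-q-(2q^2-2q-2) \;=\; q^3-2q^2+q+2.
\]

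I do not expect serious obstacles: the only subtle point is confirming that no $u\in \F_{q^2}\setminus \F_q$ splits completely, so that $S$ is exactly $\F_q^*$ rather than something larger, and verifying that the pole conditions on $y_1,y_2$ do not further shrink $S$. Both of these rely on the familiar norm/trace behavior for $\F_{q^2}/\F_q$, and the degree computations are classical, so the proof is essentially a careful unpacking of Theorem~\ref{thm:construction} in this specific case.
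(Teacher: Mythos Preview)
Your proposal is correct and follows exactly the paper's approach: view $\HH_q$ as the fiber product, take $D$ to be the zero divisor so $l=0$, and apply Theorem~\ref{thm:construction}. The paper's proof is terse and simply asserts the bound after invoking the theorem, whereas you spell out the identifications $d_{h_1}=q+1$, $d_{h_2}=q$, $d_{y_1}=q$, $d_{y_2}=q+1$, $|S|=q-1$ and verify the basis match, all of which are exactly the ingredients needed to make the paper's one-line appeal to Theorem~\ref{thm:construction} explicit.
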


    In Theorem \ref{thm:HermitianLRC2Dist}, we calculate exactly the minimum distance for this code.
  \end{example}

  \begin{example}{\textbf{LRC(2)s on the Fiber Product of two Hermitian Curves}}\label{ex:HermitianProduct}
    One can take the fiber product of any two curves with appropriate maps to the same base curve.
    As a simple example, we take a fiber product of two Hermitian curves.
    For $q$ a prime power, consider the Hermitian curves
      \[\HH_{q,1}\colon y_0^q+y_0=y_1^{q+1},\quad\text{and}\quad \HH_{q,2}\colon y_2^q+y_2=y_0^{q+1}. \]
    From each of these there is a projection to $\PP^1$ via the $y_0$ coordinate, $\PP^1_{y_0}$.
    Using these projections, we construct the fiber product $\mathcal{X}$.
    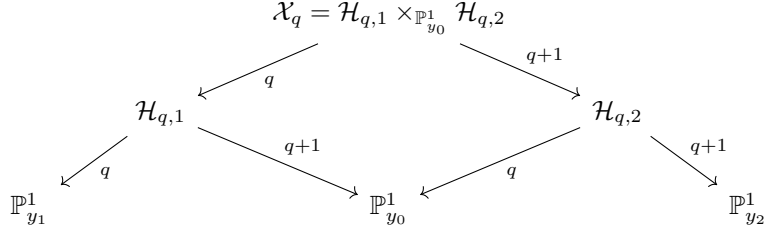
\begin{figure}
      \begin{center}
        \begin{tikzcd}
          &&\mathcal{X}_q=\HH_{q,1}\times_{\mathbb{P}^1_{y_0}} \HH_{q,2}\arrow{dr}{q+1}\arrow{dl}{q}& &\\
          &\HH_{q,1}\arrow{dr}{q+1}\arrow{dl}{q}&& \HH_{q,2}\arrow{dl}{q}\arrow{dr}{q+1}&\\
       \mathbb{P}^1_{y_1} &  &\mathbb{P}^1_{y_0} & & \mathbb{P}^1_{y_2}
        \end{tikzcd}
      \end{center}
      \caption{The Fiber product of two Hermitian curves.}
      \label{fig:FP of Hq}
      \end{figure}
    Intuitively, the affine part of this fiber product is pairs of points $(P_1,P_2)$ on the Hermitian curve $y_1^{q+1}=y_2^q+y_2$ satisfying $y_2(P_1)=y_1(P_2)$.
    Moreover
      \[\#\mathcal{X}_q(\mathbb{F}_{q^2})=q^4+1.\]
There is a single point at infinity for each Hermitian curve, so there is a single point at infinity for $\mathcal{X}_q$, which is totally ramified with ramification index $q(q+1)$ in the extension $\X_q/\mathbb{P}^1_{y_0}$.


    Also, defining $\Omega=\{\alpha \in \mathbb{F}_{q^2}:\alpha^q+\alpha=0\}$, we see that there are $q^2-q$ points of $\X_q(\mathbb{F}_{q^2})$ with $y_0$-coordinate $\alpha \not\in \Omega$ that are split completely in the extension $\mathcal{X}_q/\mathbb{P}^1_{y_0}$ (so have ramification index equal to $1$), and another $q$ points with $y_0$-coordinate $\alpha \in \Omega$ that ramify, but not completely; they have ramification index $q+1$.

    Since all ramification is tame in the extension $\X_q/\HH_{q,2}$, and $g(\HH_{q,2})=q(q-1)/2$, we can compute the arithmetic genus of the fiber product using Riemann-Hurwitz formula to get $g(\mathcal{X}_q)=q^3-q.$


    Following the construction from \cite{HMM}, we now present a code with two recovery sets by the evaluation of the splitting points on $\mathcal{X}_q(\mathbb{F}_{q^2})$.
    Let $B$ the set of $q^4-q^2$ points of $\mathcal{X}_q(\mathbb{F}_{q^2})$ that are above the $q^2-q$ points with $y_0$-coordinate $\alpha \in \mathbb{F}_{q^2}\setminus \Omega$ that split completely:
    $$B=\{(\alpha,\beta_1,\beta_2):\alpha \in \mathbb{F}_{q^2}\setminus \Omega, \, \beta_1^{q+1}=\alpha^q+\alpha \text{ and }\alpha^{q+1}=\beta_2^q+\beta_2\}.$$
    Let $P_\alpha=(\alpha,\beta_1,\beta_2)\in B$.
    Then  the sets $B_{\alpha}^j=\{(\alpha,y_1,y_2)\in B: y_k=\beta_k \ \forall k\neq j\}$ for $j=1,2$ are recovery sets for the position corresponding to $P_\alpha$.  We have $|B_{\alpha}^1|=q-1$ and $|B_{\alpha}^2|=q$.

    We define $$V=\textrm{Span}\{y_0^i\,y_1^j\,y_2^k:0\leq i \leq l, 0 \leq j \leq q-2, 0\leq k \leq q-1\},$$ with $l \leq\frac{q^4-2q^3+3q+1}{q(q+1)}$. 

     \begin{theorem} \label{thm:THC Code}
      For the fiber product $\mathcal{X}_q$ with $B$, $l$ and $V$ defined as above, the evaluation code $C(V,B)$ is a locally recoverable $[n,k,d]$-code over $\FF_{q^2}$ with availability $2$
      and locality $\left(q-1,q\right)$ where
      \begin{align*}
        n & = q^2(q^2-1), \\
        k &= (l+1)(q-1)q, \text{ and}\\
        d & \geq n-l q (q+1)-(q-1)q^2-(q-2)(q+1)^2.\\
      \end{align*}
    \end{theorem}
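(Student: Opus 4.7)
The plan is to check the hypotheses of Theorem \ref{thm:construction} and feed in the numerical invariants of the fiber product $\X_q$. The work splits into four small computations: the degree of the covering $g:\X_q\to\PP^1_{y_0}$, the cardinality of the splitting locus $S$, the dimension of the Riemann--Roch space, and the degrees $d_{y_1}$ and $d_{y_2}$ of the two projections to the $y_i$-lines.

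First I would set up the maps. The projections $h_1:\HH_{q,1}\to\PP^1_{y_0}$ and $h_2:\HH_{q,2}\to\PP^1_{y_0}$ have degrees $d_{h_1}=q+1$ and $d_{h_2}=q$ respectively, obtained by reading off the degrees in $y_1$ and $y_2$ in the defining equations. Since $\gcd(q,q+1)=1$, the extensions $\F_{q^2}(\HH_{q,i})/\F_{q^2}(\PP^1_{y_0})$ are linearly disjoint, so the remark following \eqref{eqn:trivial_intersection} gives $d_g=q(q+1)$ and the degrees of the complementary projections $\X_q\to\HH_{q,1}$ and $\X_q\to\HH_{q,2}$ are $q$ and $q+1$ respectively.

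Next, the evaluation set. The setup in the statement tells us that $|S|=q^2-q$ (the $y_0$-values in $\F_{q^2}\setminus\Omega$ split completely), so $n=d_g|S|=q(q+1)\cdot q(q-1)=q^2(q^2-1)$, confirming the length. For the Riemann--Roch piece, $D=l\cdot\infty_{\Y}$ on $\PP^1_{y_0}$ has $\dim\mathcal{L}(D)=l+1$, and Theorem \ref{thm:construction} gives the dimension $k=(l+1)(d_{h_1}-1)(d_{h_2}-1)=(l+1)(q-1)q$.

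For the minimum distance, I need $d_{y_1}$ and $d_{y_2}$. The map $y_1:\X_q\to\PP^1_{y_1}$ factors as $\X_q\to\HH_{q,1}\to\PP^1_{y_1}$ with degrees $q$ and $q$ (the latter being the degree in $y_0$ of $y_0^q+y_0=y_1^{q+1}$), so $d_{y_1}=q^2$. Similarly $d_{y_2}=(q+1)(q+1)=(q+1)^2$. Plugging into Theorem \ref{thm:construction},
\[
d\ \geq\ n-ld_g-(d_{h_1}-2)d_{y_1}-(d_{h_2}-2)d_{y_2}\ =\ n-lq(q+1)-(q-1)q^2-(q-2)(q+1)^2,
\]
as claimed. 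The bound on $l$ in the hypotheses is precisely what is needed to keep this quantity positive (so that nontrivial codewords still have positive weight), and the condition $S\cap\mathrm{supp}(D)=\emptyset$ is automatic since $\mathrm{supp}(D)=\{\infty_\Y\}$ while $S$ consists of affine $y_0$-values. The one step to watch for pitfalls is bookkeeping the indices: the two factor curves play asymmetric roles, and one must be careful that the degree labels in the fiber product diagram match the assignment of $d_{h_i}$ and $d_{y_i}$ used in Theorem \ref{thm:construction}; once this is set up correctly the rest is a direct substitution.
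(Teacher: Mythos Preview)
Your proof is correct and follows exactly the approach the paper intends: the theorem is stated within Example~\ref{ex:HermitianProduct} without a separate proof, relying on the reader to apply Theorem~\ref{thm:construction} with the invariants computed in the surrounding text (the degrees in Figure~\ref{fig:FP of Hq}, $|S|=q^2-q$, and $D=l\infty_{\Y}$ on $\PP^1$). Your explicit computation of $d_{y_1}=q^2$ and $d_{y_2}=(q+1)^2$ via the factorizations $\X_q\to\HH_{q,i}\to\PP^1_{y_i}$ is exactly the missing verification, and your closing caveat about index bookkeeping is apt---the paper's own text has $|B_\alpha^1|$ and $|B_\alpha^2|$ and the exponent ranges in $V$ listed in the opposite order from what the map degrees $d_{h_1}=q+1$, $d_{h_2}=q$ would dictate, though this does not affect the symmetric quantities $n$, $k$, $d$, or the unordered locality pair.
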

        In Corollary \ref{thm:MinimumDistanceTHC}, we calculate the minimum distance for this code.
  \end{example}


\begin{remark}\label{ex:HermitianDistance} Here we give a concrete example of the preceding construction of $C_{\X_3,3}$.
      Let $q=3$ and $\mathbb{F}_9=\mathbb{F}_3(a)$, where $a^2+2a+2=0$, be the finite field with 9 elements.
      Let us consider the situation of Example \ref{ex:HermitianProduct}, in which we have the fiber product of two Hermitian curves over $\F_9$:
      \[\mathcal{Y}_1\colon y_0^3+y_0=y_1^{4},\quad\text{and}\quad \mathcal{Y}_2\colon y_2^3+y_2=y_0^4, \]
      along with the fiber product $\mathcal{X}_3=\mathcal{Y}_1\times_{\mathbb{P}^1_{y_0}}\mathcal{Y}_2$.
      In this case, 
      $\Omega=\{\alpha\in \mathbb{F}_9:\alpha^3+\alpha=0\}=\{0, a+1,2a+2\}$, and we have 6 points on $\PP^1_{y_0}$ with first coordinate outside $\Omega$ that split completely in $\X_3$.
 The maximum $l$ that can be chosen to get a non-trivial bound for $d$ is $l=3$. Using this $l$ in Theorem \ref{thm:THC Code}, we get a LRC(2) of length $72$, dimension $k=24$ and minimum distance $d\geq 2$ over $\FF_9$ (an upper bound for the minimum distance is $35$, see Section \ref{SectionBounds}).
 
      
\end{remark} 

  \begin{example}{\textbf{Artin--Schreier Fiber Product and LRC($t$)}}\label{ex:ASExample}
    In \cite{HMM} the authors use a fiber product curve construction from van der Geer and van der Vlugt \cite{vdGvdVHowT0} to create codes with availability $t$ for arbitrary $t$.  Since we continue this example, we review the construction here.

    The simplest of the van der Geer and van der Vlugt constructions is given in \cite[Section 3, Method I]{vdGvdVHowT0}. 
    Let $p$ be prime, $h$ a natural number, and $q=p^{h}$. 
    Let $\{a_1,a_2, \dots, a_{h}\}$ generate $\ker(\Tr_{\mathbb{F}_{q^2}/\mathbb{F}_{q}})$ over $\mathbb{F}_p$. 
    Then the curves \[\mathcal{Y}_{i}\colon y_i^p-y_i=a_iy_0^{q+1}\] each have genus $\frac{1}{2}(p-1)q$ and have $pq^2+1$ points over $\mathbb{F}_{q^2}$, with one point, $\infty_{\mathcal{Y}_{i}}$, at infinity.

    Let $t$ be an integer with $1\leq t \leq h$ and let $\mathcal{Y}=\PP^1_{y_0}$.
    Then consider the natural map $h_i\colon \mathcal{Y}_{i}\rightarrow \mathcal{Y}$ given by projection onto the $y_0$ coordinate, where $\infty_{\mathcal{Y}}$ represents the point at infinity on the projective line $\mathbb{P}^1_{y_0}$ and $\infty_{\mathcal{Y}_{i}}\mapsto \infty_{\mathcal{Y}}$.
    These are all degree-$p$ Artin--Schreier covers of $\mathcal{Y}$, fully ramified above $\infty_\mathcal{Y}$.

    Define $\mathcal{X}=\mathcal{A}_{q,t}$ to be the fiber product of these curves $\mathcal{Y}_{i}$ over $\mathcal{Y}$; i.e.,
      \[\mathcal{A}_{q,t}=\mathcal{Y}_{1}\times_{\mathcal{Y}} \mathcal{Y}_{2}\times_{\mathcal{Y}} \dots 
\times_{\mathcal{Y}} \mathcal{Y}_{t}.\]
    The corresponding maps $g_i\colon \mathcal{A}_{q,t}\rightarrow \mathcal{Y}_{i}$ are degree $p^{t-1}$, ramified only above $\infty_{\mathcal{Y}_{i}}$.
    Let $\infty_{\mathcal{A}_{q,t}}$ be the single point above $\infty_{\mathcal{Y}}$ on $\mathcal{A}_{q,t}$.

    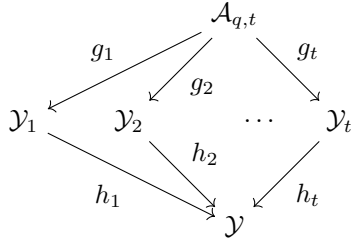
\begin{figure}[ht]
      \begin{center}
        \begin{tikzpicture}[node distance = 2cm, auto]
          \node (Y) {$\mathcal{Y}$};
          \node (Y1) [node distance=1.4cm, left of=Y, above of=Y] {$\mathcal{Y}_{2}$};
          \node (Ys) [node distance=1.4cm, right of=Y, above of=Y] {$\mathcal{Y}_{t}$};
          \node (X) [node distance=1.4cm, left of=Ys, above of=Ys] {$\mathcal{A}_{q,t}$};
          \node (Y2) [node distance=1.4cm, left of=Y1, left of=Ys, below of=X] {$\mathcal{Y}_{1}$};
          \draw[->] (Y1) to node {$h_2$} (Y);
          \draw[->] (Y2) to node [swap] {$h_1$} (Y);
          \draw[->] (Ys) to node {$h_t$} (Y);
          \draw[->] (X) to node [swap] {$g_1$} (Y2);
          \draw[->] (X) to node {$g_2$} (Y1);
          \draw[->] (X) to node {$g_t$} (Ys);
          \path (Y1) -- node[auto=false]{\ \ \ \ \ \ \ldots } (Ys);
        \end{tikzpicture}
      \end{center}
        \caption{The $t$-fold fiber product of Artin-Schreier curves, denoted $\mathcal{A}_{q,t}$.}
        \label{LRCt_fig}
    \end{figure}

    As shown in \cite[Theorem 3.1]{vdGvdVHowT0}, the curve $\mathcal{A}_{q,t}$ has genus $\frac{1}{2}(p^t-1)q$ and $|\mathcal{A}_{q,t}(\mathbb{F}_{q^2})|=p^tq^2+1$, making $\mathcal{A}_{q,t}$ maximal over $\mathbb{F}_{q^2}$.

    Note that the curve $\mathcal{A}_{q,t}$ is naturally a subvariety of $(\PP^2)^t$.  It embeds in $\PP^{t+1}$, however, by the map $\nu\colon\mathcal{A}_{q,t}\rightarrow\PP^{t+1}$ defined on affine points of $\mathcal{A}_{q,t}$ via
      \[\nu ((y_0,y_1),(y_0,y_2), \dots , (y_0,y_t))=(y_0,y_1,y_2,\dots, y_t).\] From here, we identify $\mathcal{A}_{q,t}$ with its image in $\PP^{t+1}$.
    The affine points of $\mathcal{A}_{q,t}$ are given by 
      \begin{equation}\label{eq:ASpoints}
        B=\{(y_0,y_1,y_2,\dots, y_t)\in (\mathbb{F}_{q^2})^{t+1}: y_i^p-y_i=a_iy_0^{q+1}\textrm{ for all } 1\leq i\leq t\}.
      \end{equation}
    For $1\leq i\leq t$, the functions $g_i\colon\mathcal{A}_{q,t}\rightarrow \mathcal{Y}_{i}$ are given by 
      \[g_i(y_0,y_1,y_2,\dots, y_t)=(y_0,y_i)\]
    and the functions $\tilde{g}_i\colon\mathcal{A}_{q,t}\rightarrow \tilde{\mathcal{Y}}_{i}$ are given by 
      \[\tilde{g}_i(y_0,y_1,y_2,\dots, y_t)=(y_0,y_1,y_2,\dots, y_{i-1},y_{i+1},\dots, y_t).\]
    For each $i$, the map $\tilde{g_i}$ has degree $p$.
    For $1\leq i\leq t$, the function $y_i$ has degree $d_{y_i}=q+1$, since for each $\alpha, \beta\in\mathbb{F}_{q^2}$ with $\beta\neq 0$ and $\alpha^p+\alpha = a_i\beta^{q+1}$, there are $q+1$ points 
    $Q_j=(\zeta^k\beta,\alpha)\in\mathcal{Y}_{i}(\mathbb{F}_{q^2})$, where $\zeta^{q+1}=1$ and $1\leq k\leq q+1$.

    \begin{remark}
      As observed in \cite{HMM}, when $t=h$, we have that $\mathcal{A}_{q,t}\cong\mathcal{H}_{q}$.
    \end{remark}

    Applying the construction from \cite{HMM}, we can construct codes defined over $\mathbb{F}_{q^2}$ with many recovery sets.  
    Let $P_i=(\alpha,\beta_1,\beta_2,\dots, \beta_t)\in B$.
    Then $B_{i,j}$, the $j$-th recovery set for the position corresponding to $P_i$, is the set of positions corresponding to the points in $\{(\alpha,y_1,y_2,\dots, y_t)\in B:~ y_k=\beta_k~ \forall ~k\neq j\}.$  We then have $|B_{i,j}|=p$.
    On points corresponding to the positions in $B_{i,j}$, any function in $V$ varies as a polynomial in $y_j$ of degree at most $(p-2)$ and can therefore be interpolated by knowing its values on any $p-1$ points.

    Given $h, t$ as above, choose $l\leq \left(q^2-\frac{t(p-2)(q+1)p^{t-1}+1}{p^t}\right)$ to ensure the evaluation map is injective. Note that the evaluation map may be injective for larger values of $l$ but that the given lower bound ensures that $d\geq 1$ in the Theorem below.  
    Let $D=l\infty_{\Y}$.
    Then $\mathcal{L}(D)$ is the set of polynomials in $y_0$ of degree at most $l$, a vector space of dimension $m=l+1$.

    \begin{theorem}[\cite{HMM}] \label{thm:ASCode}
      Given $\X=\mathcal{A}_{q,t}$ the fiber product of the specified Artin--Schreier curves, with $B$ and $l$ as above, let $D=l\infty_{\Y}$, and $V$ as defined in Theorem \ref{thm:construction}.  We define $C_{\mathcal{A}_{q,t},l}=C(V,B)$.
      Then $C_{\mathcal{A}_{q,t},l}$ is a locally recoverable $[n,k,d]$-code over $\FF_{q^2}$ with availability $t$
      and locality $\left(p-1,p-1,\dots,p-1 \right)$ where
      \begin{align*}
        n & = p^tq^2, \\
        k &= (l+1)(p-1)^t, \text{ and}\\
        d & \geq n-lp^t-t(p-2)(q+1)p^{t-1}. \\
      \end{align*}
    \end{theorem}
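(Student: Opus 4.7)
The plan is to apply Theorem \ref{thm:construction} directly to $\mathcal{A}_{q,t}$, since the statement is a specialization of the general construction of Section \ref{sec:CodeConstruction}. The work reduces to identifying the relevant invariants, verifying one complete-splitting claim, and substituting into the formulas.

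For the numerical data: the base is $\mathcal{Y}=\PP^1_{y_0}$, each $h_i\colon\mathcal{Y}_i\to\mathcal{Y}$ is a degree-$p$ Artin--Schreier cover so $d_{h_i}=p$, and the $\FF_p$-linear independence of $\{a_1,\dots,a_t\}$ inside $\ker\Tr_{\FF_{q^2}/\FF_q}$ together with a short pole-order argument at $\infty$ shows that the Artin--Schreier classes $a_iy_0^{q+1}\in \FF_{q^2}(y_0)/\wp(\FF_{q^2}(y_0))$ are $\FF_p$-independent (no nonzero $\FF_p$-linear combination $a y_0^{q+1}$ can equal $h^p-h$ because the pole order $q+1$ is not divisible by $p$). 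Standard Artin--Schreier theory then forces the extensions $\FF_{q^2}(\mathcal{Y}_i)/\FF_{q^2}(\mathcal{Y})$ to be pairwise linearly disjoint, so $d_g=p^t$, $\deg(g_i)=p^{t-1}$, and $d_{y_i}=(q+1)p^{t-1}$ (composing the degree-$p^{t-1}$ projection $g_i$ with the degree-$(q+1)$ map $y_i\colon\mathcal{Y}_i\to\PP^1_{y_i}$ recorded in the preceding text).

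Next take $S$ to be the $q^2$ affine $\FF_{q^2}$-points of $\PP^1_{y_0}$. For any $\alpha\in\FF_{q^2}$ we have $\alpha^{q+1}=N_{\FF_{q^2}/\FF_q}(\alpha)\in\FF_q$, so
\[
\Tr_{\FF_{q^2}/\FF_p}(a_i\alpha^{q+1})
=\Tr_{\FF_q/\FF_p}\bigl(\alpha^{q+1}\Tr_{\FF_{q^2}/\FF_q}(a_i)\bigr)=0,
\]
hence $a_i\alpha^{q+1}\in\wp(\FF_{q^2})$ and $y_i^p-y_i=a_i\alpha^{q+1}$ has $p$ solutions in $\FF_{q^2}$. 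Linear disjointness then produces exactly $p^t$ points of $\mathcal{A}_{q,t}(\FF_{q^2})$ above each $\alpha\in S$, and no $y_i$ has a pole above $S$. In particular $n=|B|=d_g|S|=p^t q^2$.

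To close, take $D=l\infty_{\mathcal{Y}}$; then $\mathcal{L}(D)=\Span\{1,y_0,\dots,y_0^l\}$ has dimension $m=l+1$, so the basis in \eqref{eqn:Vbasis} has cardinality $(l+1)(p-1)^t$. The hypothesis on $l$ is precisely $l d_g+\sum_i(d_{h_i}-2)d_{y_i}<n$, which forces the distance bound in Theorem \ref{thm:construction} to be at least $1$ and hence makes $\ev_B$ injective, giving $k=\dim V=(l+1)(p-1)^t$. The locality vector $(p-1,\dots,p-1)$ and availability $t$ are immediate, and
\[
d\geq n-l d_g-\sum_{i=1}^t(d_{h_i}-2)d_{y_i}=n-lp^t-t(p-2)(q+1)p^{t-1}
\]
is exactly the asserted bound. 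The one genuinely nonformal step is the trace/pole-order bookkeeping establishing both complete splitting at every finite point and the linear disjointness of the factor extensions; once these are in hand, the rest is a mechanical substitution into Theorem \ref{thm:construction}.
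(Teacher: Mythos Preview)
Your proof is correct. The paper does not give its own proof of this theorem—it is quoted from \cite{HMM}—but the surrounding Example~\ref{ex:ASExample} records exactly the same invariants you compute (the degrees $d_{h_i}=p$, $d_g=p^t$, $d_{y_i}=(q+1)p^{t-1}$, and the fact that every affine $\FF_{q^2}$-point of $\PP^1_{y_0}$ splits completely), and the theorem then follows by plugging these into Theorem~\ref{thm:construction}, which is precisely what you do. Your trace computation for complete splitting and the pole-order argument for linear disjointness are the standard and expected justifications; there is nothing materially different here from the paper's implicit approach.
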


    In Theorem \ref{thm:ASDistance}, we compute the exact minimum distance of the code here for many values of $l$.

  \end{example}

\section{Computing Minimum Distances}\label{sec:mindist}

  A standard technique for determining the minimum distance of an evaluation code $C(V,B)$ is to first bound the minimum distance below using a geometric argument, then find an element in the space of functions $V$ that vanishes at the maximum number of points, all of which are contained in the evaluation set $B$.
  Using the bounds from \cite{HMM} and this technique, it is possible to find the exact minimum distance of some interesting codes from \cite{BTV} and \cite{HMM}.

  As a warm-up, we find the exact minimum distance of two LRCs on the Hermitian curve described in \cite{BTV}.
  The first arises from a simpler rational map construction.
  Let $C$ be the code with locality $q-1$ described in Proposition 4.1 of \cite{BTV}, i.e., the evaluation code $C(V,B)$ where $\HH_q$ is the Hermitian curve defined by $y^q+y=x^{q+1}$, $B$ is the set of $q^3$ affine points in $\HH_q(\F_{q^2})$, and $V$ is the vector space of functions generated by $\{x^iy^j\colon 0\leq i \leq l, 0\leq j\leq q-2\}$ for some fixed $l\in\mathbb{N}$.
  Note that the recovery set for the position corresponding $P\in B$ is the set of $q-1$ points 
  \[\{Q \in B: x(Q)=x(P), Q\neq P\}.\]

  \begin{thm}\label{thm:HermitianLRC2Dist} 
    When $l\leq q^2-q-2$, the code $C$ has minimum distance $$d = n-lq - (q-2)(q+1).$$
  \end{thm}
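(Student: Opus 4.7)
The plan is to establish the lower bound $d \geq n - lq - (q-2)(q+1)$ and then exhibit an explicit function in $V$ whose evaluation has weight exactly matching that bound.

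For the lower bound, view $\HH_q$ as a fiber product over $\Y = \PP^1_x$ with a single factor ($t=1$): the projection $h_1\colon \HH_q \to \PP^1_x$ has degree $d_{h_1} = q$, so $d_g = q$, while the function $y$ has degree $d_{y_1} = q+1$. Taking $D = l\cdot\infty$ on $\PP^1_x$, the basis of $V$ in \eqref{eqn:Vbasis} is exactly $\{x^iy^j : 0\leq i\leq l,\,0\leq j\leq q-2\}$, and Theorem \ref{thm:construction} yields $d \geq n - lq - (q-2)(q+1)$. (Equivalently, every nonzero $f \in V$ has poles only at the unique point at infinity of $\HH_q$, with pole order at most $lq + (q+1)(q-2)$, bounding its affine zeros accordingly.)

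For the matching upper bound, I will construct a function of the factored form
\[f(x,y) \;=\; \prod_{i=1}^{l}(x-\alpha_i)\,\prod_{j=1}^{q-2}(y-\beta_j)\in V\]
whose zero locus in $B$ has size exactly $lq + (q-2)(q+1)$. Each factor $(x-\alpha_i)$ kills exactly $q$ points of $B$ (all preimages of $\alpha_i$ under $h_1$), and for $\beta_j\in\F_{q^2}$ with $c_j\coloneqq\beta_j^q+\beta_j\neq 0$, the factor $(y-\beta_j)$ kills exactly $q+1$ points of $B$, namely those with $x^{q+1}=c_j$. The crux is arranging that these two vanishing loci are disjoint.

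The key trick is to make all $\beta_j$'s share a common value of $y^q+y$. Fix any $c \in \F_q^\ast$, let $R=\{\gamma\in\F_{q^2}:\gamma^{q+1}=c\}$, a set of size $q+1$, and let $T=\{\beta\in\F_{q^2}:\beta^q+\beta=c\}$, a set of size $q$. Choose $\beta_1,\dots,\beta_{q-2}$ to be any $q-2$ elements of $T$, and choose $\alpha_1,\dots,\alpha_l$ as any $l$ distinct elements of $\F_{q^2}\setminus R$. This is possible precisely because $|\F_{q^2}\setminus R| = q^2-q-1 \geq l$ under the hypothesis $l\leq q^2-q-2$. By construction, the $y$-factors vanish only at points with $x$-coordinate in $R$, while the $x$-factors vanish only at points with $x$-coordinate outside $R$, so the two zero sets are disjoint. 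The total zero count in $B$ is therefore exactly $lq+(q-2)(q+1)$, giving $\wt(ev_B(f)) = n - lq - (q-2)(q+1)$ and completing the proof.

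The main obstacle is the disjointness verification; the hypothesis $l\leq q^2-q-2$ is used precisely to guarantee enough room in $\F_{q^2}\setminus R$ to house the $\alpha_i$'s after the $(q+1)$-element set $R$ of forbidden $x$-coordinates has been removed.
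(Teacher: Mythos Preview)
Your proof is correct and follows essentially the same approach as the paper: both exhibit a function $\prod_i(x-\alpha_i)\prod_j(y-\beta_j)$ with all $\beta_j$ chosen from a single trace fiber $\{y:y^q+y=c\}$ (the paper takes $c=1$) and the $\alpha_i$ chosen outside the corresponding norm fiber $\{x:x^{q+1}=c\}$, so that the $x$- and $y$-vanishing loci are disjoint. Your derivation of the lower bound via Theorem~\ref{thm:construction} with $t=1$ is a clean alternative to the paper's citation of \cite{BTV}.
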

  \begin{proof}
   In \cite{BTV}, the authors prove that $n-lq-(q-2)(q+1)$ is a lower bound on the minimum distance of $C$.
    Suppose $l\leq q^2-q-2$. 
    Considering the extension of fields $\FF_{q^2}$ to $\FF_{q}$, let $\varphi_1$ be the field trace map given by
    $\varphi_1(x) = x^{q}+x$ and let $\varphi_2$ be the norm map given by $\varphi_2(x) = x^{q+1}$.
    Since $\varphi_1$ is the trace map, which is degree $q$ onto $\mathbb{F}_q$, we can write 
      \[\varphi_1^{-1}(1) = \{\gamma_1, \dots, \gamma_{q}\}.\]
    Since $\varphi_2$ is the norm map and so is degree $q+1$ onto $\mathbb{F}_q^{\times}$, we can write
      \[\FF_{q^2}\backslash(\{0\}\cup \varphi_2^{-1}(1)) = \{\beta_1,\dots,\beta_{q^2 - q - 2}\}.\]
    Define $f\in V$ by 
      \[f(x,y) = \prod_{j=1}^l(x-\beta_j)\prod_{i=1}^{q-2}(y-\gamma_i).\]
  We see that $f$ has at most $lq + (q-2)(q+1)$ zeros.
    To show that $f$ has exactly that many zeros, we must show that no evaluation point of $C$ is sent to zero by more than one factor of $f$. 
    Suppose $f(\beta_j,\gamma_i) = 0$.
  Then $\gamma_i^q+\gamma_i=1$, but by design $\beta_j^{q+1}\neq 1,$ hence $(\beta_j,\gamma_i)\not\in \mathcal{H}_q(\FF_{q^2}).$ Thus no evaluation point can be sent to zero by multiple factors, so $f$ has exactly $lq+(q-2)(q+1)$ zeros and $ev_B(f)$ has weight $n-lq-(q-2)(q+1),$ and the minimum distance is as given.

\end{proof}

  Recall that $C_{\mathcal{H}_q}$ is the LRC(2) on $\HH_q$ defined in \cite{BTV}.
  We now determine the exact minimum distance of the code.
  
  \begin{thm}\label{thm:BTVHqmindist}
    The code $C_{\mathcal{H}_q}$ has minimum distance $d=q^3-2q^2+q+2$.
    \begin{proof}

      Now let $\alpha_1,\alpha_2\in \FF_{q}\backslash\{0\}$ such that $\alpha_1\neq \alpha_2$.
      Then let $E = \{a\in \FF_{q^2} | a^{q+1} = \alpha_1\}$ and $F = \{a\in \FF_{q^2} | a^{q}+a = \alpha_2\}$.
      Because these come from the trace and norm respectively, we can write $E = \{\beta_1, \dots, \beta_{q+1}\}$
      and $F=\{\gamma_1, \dots, \gamma_{q}\}$.
      Then let $f\in V_{\mathcal{H}_q}$ be defined 
        \[f = \prod_{i=1}^{q-1}(x-\beta_i)\prod_{j=1}^{q-2}(y-\gamma_j).\]
      The function $f$ has exactly $(q-1)(q) + (q-2)(q+1)$ zeros by the same argument as in the previous construction.
      Thus $d\leq q^3-q-(q-1)(q) - (q-2)(q+1)=q^3-2q^2+q+2.$
    \end{proof}
  \end{thm}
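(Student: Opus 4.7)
The lower bound $d \geq q^3 - 2q^2 + q + 2$ follows from the proposition preceding this theorem, so my task is to produce a matching upper bound by exhibiting a nonzero codeword $\ev_B(f)$ of weight exactly $q^3-2q^2+q+2$. Equivalently, I need $f \in V_{\mathcal{H}_q}$ vanishing on exactly $2q^2-2q-2$ points of $B_{\mathcal{H}_q}$. The plan is to mimic the argument used in the proof of Theorem~\ref{thm:HermitianLRC2Dist}: take $f$ as a product of linear factors in $x$ and linear factors in $y$, choose the roots so that every resulting fiber of $\mathcal{H}_q$ lies inside $B_{\mathcal{H}_q}$ at full expected size, and push each variable to its maximum allowed degree in $V_{\mathcal{H}_q}$ (namely $q-2$ in $x$ and $q-1$ in $y$).

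Concretely, I would fix distinct $\alpha_1,\alpha_2 \in \mathbb{F}_q^{\times}$, let $E = \{a \in \mathbb{F}_{q^2}\colon a^{q+1}=\alpha_1\}$ be the norm fiber (of size $q+1$) and $F = \{a \in \mathbb{F}_{q^2}\colon a^q+a=\alpha_2\}$ the trace fiber (of size $q$), pick any $q-1$ elements $\beta_1,\dots,\beta_{q-1}\in E$ and any $q-2$ elements $\gamma_1,\dots,\gamma_{q-2}\in F$, and set
\[
f(x,y) = \prod_{j=1}^{q-2}(x-\gamma_j)\,\prod_{i=1}^{q-1}(y-\beta_i).
\]
Then $f \in V_{\mathcal{H}_q}$ by degree. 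For each $\gamma_j$, the fiber $\{x=\gamma_j\}\cap \mathcal{H}_q$ is cut out by $y^{q+1}=\alpha_2\neq 0$, giving $q+1$ solutions for $y$, all nonzero and hence all in $B_{\mathcal{H}_q}$. For each $\beta_i$, the fiber $\{y=\beta_i\}\cap \mathcal{H}_q$ is cut out by $x^q+x=\alpha_1$, giving $q$ solutions, and $\beta_i\neq 0$ (since $\alpha_1 \neq 0$) ensures all these points lie in $B_{\mathcal{H}_q}$. With multiplicity, this accounts for $(q-2)(q+1) + (q-1)q = 2q^2-2q-2$ zeros of $f$ on $B_{\mathcal{H}_q}$.

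The core step is the disjointness check: a common point $(\gamma_j,\beta_i)\in \mathcal{H}_q(\mathbb{F}_{q^2})$ would force $\beta_i^{q+1}=\gamma_j^q+\gamma_j$, i.e.\ $\alpha_1=\alpha_2$, contradicting our choice. So $f$ has exactly $2q^2-2q-2$ zeros in $B_{\mathcal{H}_q}$, and since $|B_{\mathcal{H}_q}| = q^3 - q > 2q^2-2q-2$, the codeword $\ev_B(f)$ is nonzero and has weight $(q^3-q) - (2q^2-2q-2) = q^3-2q^2+q+2$. Combined with the lower bound this gives equality. I do not anticipate any real obstacle: the disjointness check is the heart of the argument, and it is immediate from the Hermitian equation once $\alpha_1 \neq \alpha_2$.
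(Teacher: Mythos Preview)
Your proposal is correct and follows essentially the same approach as the paper: build $f$ as a product of linear factors in $x$ and in $y$ whose roots lie in a single trace fiber and a single norm fiber over distinct nonzero values $\alpha_1\neq\alpha_2\in\mathbb{F}_q$, then use the Hermitian relation to see that no evaluation point can be a root of both an $x$-factor and a $y$-factor. The only cosmetic difference is that you attach the trace-fiber roots to $x$ and the norm-fiber roots to $y$ (degree $q-2$ in $x$, $q-1$ in $y$), which is the assignment consistent with the curve equation $x^q+x=y^{q+1}$ and the stated basis of $V_{\mathcal{H}_q}$; the paper's written proof has these interchanged.
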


  Next, we determine the exact minimum distance for many codes from the fiber product of Artin--Schreier curves constructed in Theorem~\ref{thm:ASCode}.

  \begin{thm}\label{thm:ASDistance}
    Let $p$ be a prime and $q=p^h$ a prime power. For a fixed $l\in\mathbb{Z}$ with $0\leq l \leq q^2-tq-t-1$, let $C_{A_{q,t},l}$ be the LRC$(t)$ of Theorem~\ref{thm:ASCode}, constructed using the fiber product of $t$ Artin--Schreier curves.
    Then $C_{A_{q,t},l}$ has minimum distance 
    $$d=p^tq^2 - lp^t - t(p-2)(q+1)p^{t-1}.$$
    \begin{proof}
      Recall that the curves that we use to produce the fiber product $\X$ are of the form
        \[\Y_i\colon y_i^p-y_i=a_iy_0^{q+1},\]
      where $ \langle a_1,a_2,\dots,a_h\rangle_{\F_p}=\ker(\Tr_{\F_{q^2}/\F_{q}})$.
      From Theorem~\ref{thm:ASCode}, we have a lower bound for minimum distance,
        $d\geq p^tq^2 - lp^t - t(p-2)(q+1)p^{t-1}$.

      Let $\varphi_1\colon \FF_{q^2} \to \FF_{q^2}$ be defined by
        $\varphi_1(x) = x^{p}-x$
      and $\varphi_2\colon \FF_{q^2}\to \FF_{q}$ be defined by the norm map
        $\varphi_2(x) = x^{q+1}$.
      Choose values
        \[F_0 = \{\beta\in\F_{q^2}^\times : \varphi_2(a_i\beta)\neq 1\ \forall\ 1\leq i\leq t\}\]
      and $F_i = \varphi_1^{-1}(a_i^{-q})$.
      Since $\varphi_2$ is the norm map, $|\varphi_2^{-1}(1)| = q+1$, so $|F_0|=q^2-t(q+1)-1$.
      Thus we can write 
        \[F_0 = \{\beta_1,\dots,\beta_{q^2-tq-t-1}\}.\]

      By choice of $a_i$, we have $\Tr_{\F_{q^2}/\F_q}(a_i) = 0,$ and since $\F_{q^2}/\F_q$ is a degree 2 extension, $\Tr_{\F_{q^2}/\F_q}(a_i^{-1})  = 0$ too.
      Then since we are working in characteristic $p$ and the trace map can be factored,
          \begin{align*}
            \Tr_{\FF_{q^2}/\FF_p}(a_i^{-q}) & = \Tr_{\FF_{q}/\FF_p}(\Tr_{\FF_{q^2}/\FF_q}(a_i^{-q}))\\
            &= \Tr_{\FF_{q}/\FF_p}(\Tr_{\FF_{q^2}/\FF_q}(a_i^{-1})^q)\\
            &= 0.
          \end{align*}
      By the additive version of Hilbert's Theorem 90, 
      $ \varphi_1^{-1}(a_i^{-q})$ is nonempty. 
      Notice that $\varphi_1$ is separable of degree $p$, so $|F_i|=|\varphi_1^{-1}(a_i^{-q})|$ must, in fact, equal $p$.
      We then write
        \[F_i =\{\gamma_{i,1},\dots,\gamma_{i,p}\}.\]
      Define the map $f\colon\X\to \FF_{q^2}\in V$ by
        \[f(y_0,y_1,\dots,y_t) = \prod_{i=1}^l(y_0-\beta_i)\prod_{j=1}^{t}\prod_{k=1}^{p-2}(y_j-\gamma_{j,k}).\]

      Recall that $B$, as in equation \eqref{eq:ASpoints}, is the evaluation set for polynomials in $V$ and $|B|=n=p^tq^2$ is the length of $C_{A_{q,t},l}$. Certainly $n-\wt(\ev_B(f))$ is at most $l p^t + t(p-2)(q+1)(p^{h-1})$.
      We wish to show that these values are equal by showing that $f$ has exactly this many zeros in $B$ by showing that no points in $B$ have $y_i$ component in $F_i$ and $y_j$ component in $F_j$ for all $i\neq j$.
      Toward that end, suppose that $(\beta,\gamma_1,\dots,\gamma_t)\in B$ such that $f(\beta,\gamma_1,\dots,\gamma_t)=0$.

      Assume that $\gamma_i\in F_j$ for some $i\in \{1,\dots,t\}$. 
      Then $a_i^{-q} = \phi_2(\gamma_i) = \gamma_i^p-\gamma_i = a_i\beta^{q+1}$, so $a_i^{-q-1}=\beta^{q+1}$, by definition of $\X$.
      This equation has at most $q+1$ solutions, all of which are in the set $\{a_i^{-1}\varphi_1^{-1}(1)\}$.
      Since none of these are in $F_0$, we have that $\beta\notin F_0$.

      Now suppose that for some $j\in \{1,\dots,t\}$ we have $\gamma_j\in F_j$.
      Then $a_j^{-q} = \varphi_2(\gamma_j) = \gamma_j^p+\gamma_j = a_j\beta^{q+1}$.
      Thus we have $a_i^{q+1} = a_j^{q+1}$.
      Recall that $a_i$ is in the kernel of the trace map, so $a_i^{q}+a_i = 0$ and so $a_i^{q+1}= -a_i^2$.
      Similarly $a_j^{q+1} = -a_j^2$.
      Substituting these values gives us the equality $a_i^2=a_j^2$, so $a_i=a_j$ or $a_i=-a_j$,
      but $a_i$ and $a_j$ are elements of a basis for the kernel of $\Tr_{\F_{q^2}/\F_q}$ over $\F_p$, so $i=j$.

      Thus $f$ has exactly $l p^t + t(p-2)(q+1)(p^{t-1})$ zeros in $B$, so the code has the desired minimum distance,
        \[d=p^t{q^2} - lp^t - t(p-2)(q+1)(p^{t-1}).\]
    \end{proof}

  \end{thm}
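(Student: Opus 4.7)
The plan is to match the lower bound from Theorem \ref{thm:ASCode} by exhibiting an explicit $f\in V$ whose evaluation has weight exactly $n - lp^t - t(p-2)(q+1)p^{t-1}$, i.e.\ that vanishes at exactly $lp^t + t(p-2)(q+1)p^{t-1}$ points of $B$.  Following the template of Theorems \ref{thm:HermitianLRC2Dist} and \ref{thm:BTVHqmindist}, I would take
\[
  f(y_0,y_1,\dots,y_t) \;=\; \prod_{i=1}^{l}(y_0-\beta_i)\,\prod_{j=1}^{t}\prod_{k=1}^{p-2}(y_j-\gamma_{j,k}),
\]
for carefully chosen scalars $\beta_i,\gamma_{j,k}\in\F_{q^2}$, and argue that every factor contributes the expected number of zeros in $B$ while the zero sets of distinct factors are pairwise disjoint.

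For the $\beta_i$'s I would draw from $F_0 = \{\beta \in \F_{q^2}^{\times} : (a_j\beta)^{q+1}\neq 1 \text{ for all } 1\leq j\leq t\}$, namely values at which $y_0$ splits completely in $\X\to\mathbb{P}^1_{y_0}$, so that each factor $(y_0-\beta_i)$ contributes exactly $p^t$ zeros in $B$.  A short count (postponed to the disjointness step) shows the $t$ excluded sets $\{\beta : (a_j\beta)^{q+1}=1\}$ each have size $q+1$ and are pairwise disjoint, giving $|F_0|=q^2-1-t(q+1)=q^2-tq-t-1$; this is exactly the upper bound on $l$ appearing in the hypothesis, so there are enough $\beta_i$'s to choose.

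For each $j$ I would choose $\gamma_{j,1},\dots,\gamma_{j,p-2}$ from $\varphi_1^{-1}(a_j^{-q})$, where $\varphi_1(x)=x^p-x$.  To see this fiber is nonempty with size $p$, I would verify $\Tr_{\F_{q^2}/\F_p}(a_j^{-q})=0$ by factoring the trace through $\F_q$ and observing that $a_j\in\ker\Tr_{\F_{q^2}/\F_q}$ forces $a_j^{-1}\in\ker\Tr_{\F_{q^2}/\F_q}$ as well; the additive Hilbert 90 then provides a preimage and separability of $\varphi_1$ ensures the fiber has size exactly $p$.  On $\X$, the Artin--Schreier relation $y_j^p-y_j=a_jy_0^{q+1}$ forces any zero of $(y_j-\gamma_{j,k})$ in $B$ to satisfy $y_0^{q+1}=a_j^{-q-1}$, yielding $q+1$ admissible $y_0$-values with $p^{t-1}$ points of $B$ above each, for a total of $(q+1)p^{t-1}$ zeros per factor.

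The main obstacle is establishing pairwise disjointness of the zero sets, since this is where the special algebraic structure of the $a_i$'s must be used.  If a point of $B$ were a common zero of $(y_0-\beta_i)$ and $(y_j-\gamma_{j,k})$, then $\beta_i^{q+1}=a_j^{-q-1}$, contradicting $\beta_i\in F_0$.  If it were a common zero of $(y_i-\gamma_{i,k})$ and $(y_j-\gamma_{j,k'})$ with $i\neq j$, then the two induced equations $y_0^{q+1}=a_i^{-q-1}=a_j^{-q-1}$ give $a_i^{q+1}=a_j^{q+1}$; using $a_i^q=-a_i$ (since $a_i\in\ker\Tr_{\F_{q^2}/\F_q}$) this collapses to $a_i^2=a_j^2$, so $a_i=\pm a_j$, contradicting the $\F_p$-linear independence of $\{a_1,\dots,a_h\}$.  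Once disjointness is in hand, summing the contributions yields exactly $lp^t+t(p-2)(q+1)p^{t-1}$ zeros of $f$ in $B$, matching the lower bound and pinning down $d$.
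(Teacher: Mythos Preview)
Your proposal is correct and follows essentially the same approach as the paper: the same explicit function $f$, the same sets $F_0=\{\beta\in\F_{q^2}^\times:(a_j\beta)^{q+1}\neq 1\ \forall j\}$ and $F_j=\varphi_1^{-1}(a_j^{-q})$, the same trace-factorization plus additive Hilbert~90 argument to populate $F_j$, and the same disjointness argument reducing to $a_i^{q+1}=a_j^{q+1}\Rightarrow a_i^2=a_j^2$ via $a_i^q=-a_i$. You are in fact slightly more careful than the paper in noting that the pairwise disjointness of the excluded sets in $F_0$ (needed for the count $|F_0|=q^2-t(q+1)-1$) follows from the very same $a_i^2=a_j^2$ computation.
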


\subsection{A Condition for Exact Minimum Distance.}
  More generally, we may summarize the situation in which this technique will give the exact minimum distance of codes from the construction in Theorem \ref{thm:construction}.
  \begin{theorem}\label{thm:MinimumDistance}
    Let $C(V,B)$ be a locally recoverable code constructed as in Section \ref{sec:CodeConstruction}, where $V$ has basis given by \eqref{eqn:Vbasis}, $B$ is the evaluation set as in \eqref{eqn:Bdef}, and $y_0^j\in\mathcal{L}(D)$ for $0\leq j \leq l$. If it is possible to find sets $F_0,F_1,\dots,F_t\subseteq\F_q$ such that 
    \begin{enumerate}[label=$(\arabic*)$]
      \item $F_i\subseteq y_i(B)$ for all $i=0,\dots,t$,
      \item $|F_0|=l$,
      \item $|F_i|\geq d_{h_i}-2$ for all $i=1,\dots,t$,
      \item\label{assumption:coordmatch} for all $i\neq j$ with $0\leq i,j\leq t$ there is no $P\in \X(\overline{\F_q})$ with $y_i$-coordinate in $F_i$ and $y_j$-coordinate in $F_j$, and 
      \item\label{assumption:notramified} for all $i$ with $0\leq i\leq t$, the projection $y_i\colon\X\to \PP_{y_i}^1$ is not ramified over any point $P\in\PP^1_{y_i}$ with $y_i(P)\in F_i,$
    \end{enumerate}
    then the code $C(V,B)$ has minimum distance
      \begin{equation}\label{eq:mindist}d=n-l
        d_{y_0}-\sum_{i=1}^t(d_{h_i}-2)d_{y_i},
      \end{equation}
    where $n=|B|$ is the length of the code.
  \end{theorem}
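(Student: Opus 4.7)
The lower bound $d \geq n - l d_{y_0} - \sum_{i=1}^t (d_{h_i}-2) d_{y_i}$ is already Theorem \ref{thm:construction}, so the task reduces to exhibiting a codeword whose weight matches it. Mimicking the explicit constructions in the proofs of Theorems \ref{thm:HermitianLRC2Dist} and \ref{thm:ASDistance}, for each $i \geq 1$ I would fix a subset $F_i' \subseteq F_i$ of size exactly $d_{h_i}-2$ (permitted by (3)), set $F_0' = F_0$, and define
\[
f \;=\; \prod_{\beta \in F_0}(y_0 - \beta) \cdot \prod_{i=1}^{t} \prod_{\gamma \in F_i'}(y_i - \gamma).
\]
Expanding, $f$ is an $\F_q$-linear combination of monomials $y_0^{j} y_1^{e_1} \cdots y_t^{e_t}$ with $0 \leq j \leq l$ and $0 \leq e_i \leq d_{h_i} - 2$; since $y_0^{j} \in \mathcal{L}(D)$ for $0 \leq j \leq l$ by hypothesis, each such monomial lies in the span of the basis \eqref{eqn:Vbasis}, so $f \in V$.

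The heart of the proof is a zero count on $B$. I would decompose
\[
\{P \in B : f(P) = 0\} \;=\; \bigsqcup_{i=0}^{t} \bigl\{P \in B : y_i(P) \in F_i'\bigr\},
\]
where disjointness comes from condition (4). For each $\gamma \in F_i'$, condition (5) ensures that $y_i - \gamma$ has $d_{y_i}$ distinct geometric zeros on $\X$, and I want all of them to lie in $B$. For $i=0$ this is immediate from the identification $\Y \cong \PP^1_{y_0}$ (induced by $\F_q(\Y) = \F_q(y_0)$) and $F_0 \subseteq y_0(B) = y_0(S)$: each $\beta \in F_0$ corresponds to a point of $S$, and the split-completely hypothesis forces all $d_{y_0} = d_g$ preimages under $g$ to lie in $B$. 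For $i \geq 1$, I would factor $y_i \colon \X \to \PP^1_{y_i}$ as $\X \xrightarrow{g_i} \Y_i \to \PP^1_{y_i}$ and aim to show that every $R \in y_i^{-1}(\gamma) \cap \Y_i$ satisfies $h_i(R) \in S$; once established, split-completeness places all $d_{g_i}$ points of $g_i^{-1}(R)$ inside $B$, and summing over the $d_{y_i}/d_{g_i}$ preimages $R$ (distinct by (5)) yields the full $d_{y_i}$ zeros per factor. Assembling these counts produces exactly $l d_{y_0} + \sum_{i=1}^t (d_{h_i}-2) d_{y_i}$ zeros of $f$ in $B$, whence $\wt(\ev_B(f))$ attains the lower bound with equality.

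The main obstacle is this last propagation step: condition (1) only places $F_i$ inside $y_i(B)$, which guarantees one preimage of each $\gamma \in F_i$ in $B$, not that every geometric preimage of $\gamma$ lies in $B$. I expect this to be bridged by a Galois-theoretic argument in the Galois closure of $\F_q(\X) / \F_q(\Y)$, using that the decomposition group acts transitively on fibers and preserves the split-completely set $S$, together with condition (5) so that ramification does not interfere. In the worked examples of Theorems \ref{thm:HermitianLRC2Dist} and \ref{thm:ASDistance} this propagation is carried out by a direct arithmetic verification (trace-norm compatibility for $\F_{q^2} / \F_q$ and Hilbert's Theorem 90), and I expect the general proof to abstract the pattern seen there.
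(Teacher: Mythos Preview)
Your approach is exactly the paper's: invoke Theorem~\ref{thm:construction} for the lower bound, build the same product function
\[
f=\prod_{\beta\in F_0}(y_0-\beta)\prod_{i=1}^t\prod_{\gamma\in F_i'}(y_i-\gamma),
\]
and count its zeros using conditions (4) and (5). The paper's proof is extremely brief: it records that condition~(5) forces $|y_i^{-1}(\gamma)|=d_{y_i}$, that condition~(4) makes these fibers disjoint, and concludes.

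Where you differ is that you single out the ``propagation'' issue---that condition~(1) only places one preimage of each $\gamma\in F_i$ inside $B$, not the full $y_i$-fiber---and propose a Galois-theoretic fix. The paper's own proof does \emph{not} address this point; it simply asserts the zero count after invoking ``points in $B$ are fully split'' together with assumption~(5), without explaining why every geometric zero of $y_i-\gamma$ on $\X$ lands in $B$. In the concrete instances (Theorems~\ref{thm:HermitianLRC2Dist}, \ref{thm:BTVHqmindist}, \ref{thm:ASDistance}, Corollary~\ref{thm:MinimumDistanceTHC}) this is a non-issue because $B$ is taken to be all affine $\F_{q^2}$-points, or all such points with $y_0$ avoiding an explicit small set, so membership in $B$ is checked directly; but at the generality of Theorem~\ref{thm:MinimumDistance} the paper leaves it implicit. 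So your proposal matches the paper's argument and is, if anything, more scrupulous about a step the paper elides.
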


  \begin{remark}
   If $\Y=\PP^1_{y_0}$, $h_i:\Y_i\to \Y$ given by projection onto $y_0$, and $D=l\infty_{\Y}$, as is the case in all examples in this paper, we have that $y_0^j\in\mathcal{L}(D)$ for $0\leq j \leq l$ and $y_0:\X \to\Y$ is an unramified map above $S$ from the code construction.
  \end{remark}
  \begin{proof}
    By Theorem~\ref{thm:construction}, the right hand side of \eqref{eq:mindist} is a lower bound on the minimum distance of such a code.

    Now, label the elements of sets $F_0,F_1,\dots,F_t$ as
      \[F_0=\{\beta_1,\dots,\beta_l\}\text{ and }F_i=\{\gamma_{i,1},\dots,\gamma_{i,|F_i|}\}.\]
    Then we can define the polynomial in $V$,
      \[f=\prod_{j=1}^l(y_0-\beta_j)\prod_{i=1}^t\prod_{k=1}^{d_{h_i}-2}(y_i-\gamma_{i,k}).\]
    Since the points in $B$ are fully split in the extension $\F_q(\X)/\F_q(\Y)$ and assumption \ref{assumption:notramified} we have that $|y_0^{-1}(\beta_j)|=d_{y_0}$ for all $j$ and $|y_i^{-1}(\gamma_{i,k})|=d_{y_i}$ for all $i,k$.
    By assumption \ref{assumption:coordmatch}, we know that $f$ must have exactly
      \[ld_{y_0}+\sum_{i=1}^t(d_{h_i}-2)d_{y_i}\]
    zeros, so the code has minimum distance
      \[d=n-\left(ld_{y_0}+\sum_{i=1}^t(d_{h_i}-2)d_{y_i}\right).\]
  \end{proof}
  \subsubsection{Examples of Applying Theorem \ref{thm:MinimumDistance}}
Here, we give two extremely concrete examples to illustrate the application of this general condition.

    \begin{example}\label{ex:ASExample2}
      Let $p=3$ and $t=h=2$, so $q=p^h=9$ and we work over $\F_{p^{2h}}=\F_{81}$.  Let $b$ be a non-trivial fifth root of unity for which $\F_{p^{2h}}= \F_p(b)$. 
      Let $a_1 = b^2 + b + 2$ and $a_2 = b^3 + b + 2$ be generators of
      $\ker(\Tr_{\F_{p^{2h}}/\F_{p^2}})=\{x\in\F_{p^{2h}} : a^9+a=0\}$. Then we have explicit curves
        \[
        \Y_1\colon y_i^3-y_i=(b^2 + b + 2)y_0^{3^2+1}\quad\text{and}\quad
        \Y_2\colon y_i^3-y_i=(b^3 + b + 2)y_0^{3^2+1}.
        \]
      Each of these curves has a projection onto $\PP^1$ via their $y_0$-coordinate, which we will denote $h_1$ and $h_2$. Consider their fiber product
        \[\mathcal{A}_{9,2}=\Y_1\times_{\PP^1}\Y_2, \]
      which is a genus 36 curve. Each of the maps $g_i\colon\X\to\Y_i$ are degree 3 and ramified only above the point at infinity. We can realize the 729 affine points of $\mathcal{A}_{9,2}(\F_{p^{2h}})$ to be the set
        \[\mathcal{P}=\left\{(y_0,y_1,y_2)\in\A_{\F_{p^{2h}}}^3 : y_i^3+y_i=a_i y_0^{3^2+1}\ \forall\ i\right\}.\]

      In order to satisfy the hypothesis of Theorem \ref{thm:MinimumDistance}, it will suffice to find sets $F_0$, $F_1$, and $F_2$ with $|F_1|=|F_2|=1$ and $|F_0|=l$, where $D=l\infty_{\Y}$.
      We choose $F_1 = \{b^2+b\}$ and $F_2 = \{b^3+2b^2\}$, each of which will eliminate $10$ possible $y_0$ values from entry into $F_0$, as there are 10 points in $\mathcal{P}$ with $y_1$-coordinate $b^2+b$ and 10 with $y_2$-coordinate $b^3+2b^2$.
      Thus $61$ values remain as possible elements of $F_0$ and hence we may use any value $0\leq l\leq 61-1$, to receive a code with the prescribed minimum distance, as in Theorem \ref{thm:MinimumDistance}.
    \end{example}

    \begin{example}\label{ex:HermitianDistance2}
      If we try to apply Theorem \ref{thm:MinimumDistance} to get the exact minimum distance for the code $C_{\X_3,3}$ over $\mathbb{F}_9$ constructed in Remark \ref{ex:HermitianDistance} we see that the sets $F_0$, $F_1$ and $F_2$ cannot be built, so a minimum weight codeword cannot be constructed by this method and the miniumum distance cannot be determined by the theorem.
      Instead, let us consider the situation of Example \ref{ex:HermitianProduct} over the finite field $F_{4^2}$, i.e. the fiber product $\mathcal{X}_{4}=\mathcal{Y}_1\times_{\mathbb{P}^1_{y_0}}\mathcal{Y}_2$, where we define $\Y_1$ and $\Y_2$ to be copies of the Hermitian curve $\HH_4$ with equations given by:
      \[\mathcal{Y}_1\colon y_0^4+y_0=y_1^{5},\quad\text{and}\quad \mathcal{Y}_2\colon y_2^4+y_2=y_0^5. \]

      In this case, if $a \in \F_{16}$ is such that $a^4+a=1$, then the finite ramified points in $\X_{4}$ have first coordinate in $\Omega=\{\alpha\in \mathbb{F}_{16}:\alpha^4+\alpha=0\}=\{0, a^2+a,a^2+a+1,1\}$, and we have 12 points on $\PP^1_{y_0}$ with first coordinate outside $\Omega$ that split completely in $\X_{4}$.

 The maximum $l$ that can be chosen to get a non-trivial bound for $d$ is $l=6$. But we can not build a set $F_0$ with $6$ elements satisfying the hypothesis of Theorem \ref{thm:THC Code}. So we will build one using $l=4$. Let $B$ be the set of 240 evaluation points in $\mathcal{X}_4(\mathbb{F}_{16})$ such that $y_0(B)=\FF_{16}\setminus \Omega$ is the set of $y_0$-coordinates.
       Defining $F_0=\{a^3+a+1, a^3+a^2+1, a^3+a^2+a, a^3+1\}$, $F_1=\{a^3, a^3+a^2, a^3+a, a^3+a^2+a+1, 1\}$ and $F_2=\{a, a^2, a+1, a^2+1\}$, we can see that the hypothesis of Theorem \ref{thm:MinimumDistance} hold and therefore $C_{\X_4,4}$, i.e. the evaluation code of functions from 
      $$V=\Span\{y_0^iy_1^{e_1}y_2^{e_2}|i=0,\ldots, 4;e_1=0,\ldots, 3;e_2=0,1, 2\},$$ evaluated at points in $B$, is an $[240,60,62]$-locally recoverable code with availability 2. Every coordinate in a codeword can be recovered using two possible sets: one with 3 elements and another with 4 elements, giving a locality of $(3,4)$. 
    \end{example}

The situation of the previous example can be generalized to compute the exact minimum distance for the code $C_{\X_q,l}$ as follows.

\begin{corollary}\label{thm:MinimumDistanceTHC} Let $q>3$ and $\mu \in \FF_{q}^{\times}$ such that 
$\mu\neq \alpha^{q+1}$ for all $\alpha \in \FF_{q^2}$ such that $\alpha^{q+1}=\alpha^q+\alpha$. 
For $0\leq l \leq q$, the code $C_{\X_q,l}$ of Theorem \ref{thm:THC Code} over $\FF_{q^2}$ has minimum distance $$d=n-l q (q+1)-(q-1)q^2-(q-2)(q+1)^2.$$
\end{corollary}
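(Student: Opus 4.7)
The plan is to apply Theorem~\ref{thm:MinimumDistance}. Since Theorem~\ref{thm:construction} already gives the lower bound $d \geq n - lq(q+1) - (q-1)q^2 - (q-2)(q+1)^2$, the task reduces to producing sets $F_0, F_1, F_2$ meeting the five hypotheses, whereupon the associated product polynomial will realize this bound as the weight of a codeword.

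The choices that exploit the condition on $\mu$ are the norm preimage $F_1 = \{\beta \in \FF_{q^2} : \beta^{q+1} = \mu\}$, of size $q+1$, and the trace preimage $F_2 = \{\beta \in \FF_{q^2} : \beta^q + \beta = \mu\}$, of size $q$; both exceed the required minima $d_{h_1}-2 = q-1$ and $d_{h_2}-2 = q-2$, and since $\mu \neq 0$ each lies in $y_1(B)$ and $y_2(B)$ respectively. The central observation is that condition~\ref{assumption:coordmatch} for the pair $(1,2)$ is equivalent to the stated hypothesis on $\mu$: a common point $P = (y_0, y_1, y_2) \in \X(\overline{\FF_q})$ with $y_1 \in F_1$ and $y_2 \in F_2$ would force $y_0^q + y_0 = y_1^{q+1} = \mu$ and $y_0^{q+1} = y_2^q + y_2 = \mu$; the latter implies $y_0 \in \FF_{q^2}^\times$, so $y_0$ is an element of $\FF_{q^2}$ with both trace and norm equal to $\mu$, which is precisely what the hypothesis excludes.

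The set $F_0$ is then taken to be any subset of $y_0(B) \setminus (\Tr^{-1}(\mu) \cup N^{-1}(\mu))$ of size $l$. This automatically yields the $(0,1)$ and $(0,2)$ cases of~\ref{assumption:coordmatch}, since $y_1(P) \in F_1$ forces $y_0(P) \in \Tr^{-1}(\mu)$ and $y_2(P) \in F_2$ forces $y_0(P) \in N^{-1}(\mu)$. Existence of such an $F_0$ for every $l \leq q$ is a counting argument: $|y_0(B)| = q^2 - q$, the intersection $\Tr^{-1}(\mu) \cap y_0(B)$ has size $q$ (disjoint from $\Omega$ because $\mu \neq 0$), and $|N^{-1}(\mu) \cap y_0(B)|$ lies between $q-1$ and $q+1$; moreover the hypothesis on $\mu$ makes these two preimages disjoint. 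A short case split on the characteristic, using that $|N^{-1}(\mu) \cap \Omega|$ is $1$ in characteristic $2$ and is $0$ or $2$ otherwise, shows that at least $q$ elements survive whenever $q > 3$, which is exactly where the hypothesis $q > 3$ enters.

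The remaining conditions reduce to a ramification analysis of the three projections through their factorizations $\X_q \to \HH_{q,i} \to \PP^1_{y_i}$. The map $y_0 = g$ is ramified only above $\Omega$, which $F_0$ avoids by construction; and $y_1$ is unramified at every finite point because both layers of its factorization have separable defining equations in the remaining variable. The corresponding verification for $y_2$ above $F_2$ is the most delicate step and is the main technical obstacle of the proof: it requires combining the arithmetic of the norm map, the description of $\Omega$, and the specific constraint imposed on $\mu$ in order to control the ramification locus and certify that condition~\ref{assumption:notramified} holds for $i=2$. Once all five hypotheses are confirmed, Theorem~\ref{thm:MinimumDistance} applies directly and produces the claimed value of the minimum distance.
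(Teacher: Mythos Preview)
Your plan follows the paper's: invoke Theorem~\ref{thm:MinimumDistance} with the same norm-fiber $F_1 = \{x : x^{q+1} = \mu\}$ and trace-fiber $F_2 = \{x : x^q + x = \mu\}$, and verify hypothesis~(4) through the stated constraint on $\mu$. The one substantive difference is your choice of $F_0$. The paper takes $F_0$ inside the set $\{x \in \FF_{q^2} : x^{q+1} = x^q + x\} \setminus \{0\}$, which one checks has exactly $q$ elements, and on which the hypothesis on $\mu$ immediately forces both $\alpha^q+\alpha\neq\mu$ and $\alpha^{q+1}\neq\mu$; this sidesteps your counting argument entirely. Your generic complement works too, but the paper's choice is cleaner and explains directly why the bound $l\le q$ appears.

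Where your proposal differs more significantly is in flagging hypothesis~(5) for $i=2$, which the paper's proof passes over in silence. You are right that this is delicate: the composite $y_2 \colon \X_q \to \HH_{q,2} \to \PP^1_{y_2}$ ramifies over $\gamma \in F_2$ exactly when some solution $y_0$ of $y_0^{q+1} = \mu$ lies in $\Omega$, and any such $y_0$ also places part of $y_2^{-1}(\gamma)$ outside $B$. However, your assertion that the stated hypothesis on $\mu$ is what closes this step is mistaken. That hypothesis excludes only $\mu = \alpha^{q+1}$ for $\alpha$ with $\alpha^{q+1} = \alpha^q + \alpha$; nonzero elements of $\Omega$ never satisfy that equation, so the hypothesis says nothing about them. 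Concretely, take $q=5$ and $\mu=2$ (the value the paper itself recommends): any $y_0 \in \FF_{25}$ with $y_0^2 = 3$ has $y_0^4 = 9 = -1$, hence $y_0 \in \Omega$, and $y_0^{6} = -y_0^2 = -3 = 2 = \mu$. Then each fiber $y_2^{-1}(\gamma)$ meets $B$ in only $(q-1)(q+1)=24$ points rather than $(q+1)^2=36$, and the product function does not achieve the claimed weight. So the obstacle you identify is genuine, but it cannot be resolved along the line you sketch; neither your argument nor the paper's verifies condition~(5) for $i=2$ under the stated hypothesis alone.
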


\begin{proof}

Let \begin{align*}
F_0&\subseteq\{x \in \FF_{q^2}: x^{q+1}=x^q+x\}\setminus\{0\},\\
F_1&=\{x \in \FF_{q^2}: x^{q+1}=\mu\}\\
\intertext{and}
F_2&=\{x \in \FF_{q^2}:x^q+x=\mu\}.\\
\end{align*} By construction, $|F_0|=l$ with $0\leq l \leq q$, $|F_1|=q+1$ and $|F_2|=q$. Also, if $P=(\alpha, \beta, \gamma)\in B$ then $\alpha^q+\alpha \neq 0$, $\beta^{q+1}=\alpha^q+\alpha$ and $\gamma^q+\gamma=\alpha^{q+1}$. Therefore
$F_i\subseteq y_i(B)$ for $i=0,1,2$. 
Moreover, if $P\in \X(\overline{\F_q})$ has $y_i$-coordinate in $F_i$ then its $y_j$-coordinate is not in $F_j$ for $j\neq i$. In fact, if $\alpha \in F_0$, then $\alpha^q+\alpha=\alpha^{q+1}$, so $\beta \in F_1$ yields to a contradiction since in this case $\mu=\beta^{q+1}=\alpha^q+\alpha=\alpha^{q+1}$, and the same happens if $\gamma \in F_2$.
A similar argument shows that the other two cases can not occur either.
 Therefore, Theorem \ref{thm:MinimumDistance} holds.
\end{proof}

\begin{remark}
Notice that for many examples, we can choose $\mu=1$. Actually this is the case, for example, for $q=4,7, 13, 16, 19, 25$. For $q=5, 9, 17$ we can use $\mu=2$ and for $q=11$, $\mu=5$ satisfies the required hypothesis. 
\end{remark}

  \subsection{A Combinatorial Condition for Exact Minimum Distance}
    
    We apply a very simple counting argument to show that the conditions of Theorem \ref{thm:MinimumDistance} hold when the evaluation set is large enough in relation to the map degrees and the base curve of the fiber product is $\Y=\PP^1_{y_0}$. Let $S=S_0$ be the set of points on $\Y$ lying below the points of $B$, and let $S_i$ be the set of points of $\mathcal{Y}_{i}$ lying below the points of $B$ for each $i$, $1\leq i\leq t$.
    As a non-infinite point of the projective line, each point of $S_0$ corresponds to a value $\alpha$ in $\mathbb{F}_q$.

    \begin{theorem}\label{thm:counting}
      Let $C(V,B)$ be a code constructed from a fiber product as in Theorem \ref{thm:construction}, where $\Y=\mathbb{P}^1_{y_0}$, and let $\eta_0=1$ and $\eta_i=\deg(h_i)$ for $1\leq i \leq t$.  
      Let $\psi_0=l$ and $\psi_i=\deg(y_i)$ for $1\leq i \leq t$, where here we consider the function $y_i:\Y_i\rightarrow\mathbb{P}^1_{y_i}$. Then the conditions of Theorem \ref{thm:MinimumDistance} above hold whenever 
        \[|S_i|\geq\sum_{i\neq j}(\eta_i-2)\psi_i \eta_j\psi_j\] for all $1\leq i\leq t$ and 
        \[|S_0|\geq\sum_{j=0}^t\eta_j\psi_j.\]

    \end{theorem}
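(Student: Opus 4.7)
The plan is to produce sets $F_0, F_1, \dots, F_t$ meeting the hypotheses of Theorem~\ref{thm:MinimumDistance} by a greedy counting argument. Because $\Y = \PP^1_{y_0}$ and $D = l\infty_{\Y}$, the remark following Theorem~\ref{thm:MinimumDistance} confirms that $y_0^j \in \mathcal{L}(D)$ for $0 \leq j \leq l$ and that $y_0 \colon \X \to \PP^1_{y_0}$ is unramified above every point of $S_0$. So the task reduces to exhibiting $F_0 \subseteq S_0$ of size $l$ and, for each $i \geq 1$, a set $F_i \subseteq y_i(S_i) \subseteq y_i(B)$ of size $\eta_i - 2$ so that conditions (4) and (5) of Theorem~\ref{thm:MinimumDistance} hold.

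First I would estimate the candidate pools. Since $y_i \colon \Y_i \to \PP^1_{y_i}$ has degree $\psi_i$, the image $y_i(S_i)$ contains at least $|S_i|/\psi_i$ distinct values, while the pool for $F_0$ inside $S_0$ has exactly $|S_0|$ elements. Next I would bound the subset of each pool that is forbidden. For $F_i$ with $i \geq 1$ and for any $j \neq i$ and $\gamma_j \in F_j$, the preimage $y_j^{-1}(\gamma_j) \subseteq \Y_j$ contains at most $\psi_j$ points; each projects via $h_j$ to a value of $y_0$, whose preimage in $\Y_i$ has at most $\eta_i$ points, giving at most $\eta_i \psi_j$ forbidden values of $y_i$ per $\gamma_j$. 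Summing over $\gamma_j$ in each $F_j$ (with $|F_0| = l = \psi_0$, $\eta_0 = 1$, and $|F_j| \leq \eta_j$ for $j \geq 1$), and absorbing the finite ramification locus of $y_i \colon \X \to \PP^1_{y_i}$ (condition (5)) into the count, the stated hypothesis on $|S_i|$ guarantees a pool large enough to pick $\eta_i - 2$ unforbidden values.

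Analogously, for $F_0$, each $\gamma_j \in F_j$ with $j \geq 1$ forbids at most $\psi_j$ values of $y_0$ (namely the images under $h_j$ of $y_j^{-1}(\gamma_j) \subseteq \Y_j$). The total forbidden count is bounded by $\sum_{j=1}^t (\eta_j - 2)\psi_j \leq \sum_{j=1}^t \eta_j \psi_j$, so the hypothesis
\[
|S_0| \geq \sum_{j=0}^t \eta_j \psi_j = l + \sum_{j=1}^t \eta_j\psi_j
\]
leaves at least $l$ unforbidden values in $S_0$ from which to choose $F_0$.

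The main obstacle will be tightening the double counting so that it exactly matches the stated inequalities, since the forbidden sets arising from different $F_j$'s may overlap and the ramification contribution must be absorbed alongside them. A secondary subtlety is the logical order of selection of the $F_i$'s: because all conditions of Theorem~\ref{thm:MinimumDistance} are symmetric in $i$, I expect to pick the sets sequentially and verify at each step, using only the worst-case cardinality bounds $|F_j| \leq \eta_j - 2$ (or $|F_0| = l$), that the pool cardinality suffices regardless of the previous choices.
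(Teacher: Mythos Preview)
Your proposal is correct and follows essentially the same route as the paper: construct $F_0,F_1,\dots,F_t$ by a greedy selection, bounding at each step the number of $y_i$-values that must be excluded (via the degrees $\eta_j$ and $\psi_j$) and checking that the hypotheses leave enough candidates. The paper's proof fixes the order---choosing $F_0$ first, then $F_1,\dots,F_t$ in turn, and at each step deleting from $T_j$ and $S_j$ the values and points now incompatible---whereas you leave the order open and bound the forbidden set uniformly using $|F_j|\le\eta_j$; your closing remark correctly identifies that pinning down the order and sharpening the count (e.g.\ using $|F_j|=\eta_j-2$ rather than $\eta_j$) is exactly what is needed to match the stated inequalities.
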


    \begin{proof}
      For each $i$, $\leq i \leq t$, let $T_i\subseteq \mathbb{F}_q$ be the set of values of the $y_i$-coordinates of points in $S_i$. Note that $|S_0|=|T_0|$, and that $|S_i|=\eta_i|S_0|=\eta_i|T_0|$.  
      We will proceed by removing points from $S_i$ and values from $T_i$ as we build the sets $F_i$.  We will be successful in constructing the function in the proof of Theorem \ref{thm:MinimumDistance} if we construct all the sets $F_i$ without exhausting the sets $S_j$ and $T_j$ for any $j$. 

      First, let $F_0$ be any set of $l$ elements of $T_0$.  Remove these elements from $T_0$.  For each $i$, $1\leq i\leq t$, each of these $y_0$-values will be present in at most $\eta_i$ points in $S_i$, which will cover a total of at most $l\eta_i$ values of $y_i$. 
      Remove these values from $T_i$ for each $i$. 
      These values of $y_i$ will each appear in at most $\deg(y_i)=\psi_i$ points of $S_i$.  Remove these points from $S_i$.  This accounts for at most $l\eta_i\psi_i=\eta_0\psi_0\eta_i\psi_i$ points in $S_i$ for each $i$, $1\leq i\leq t$.  

      Beginning with $i=1$, let $F_1$ consist of any $\eta_1-2$ values of $y_1$ which appear as $y_1$ coordinates in $S_1$.  These values of $y_1$ will appear in at most $(\eta_1-2)\psi_1$ points in $S_1$, which will lie above at most $(\eta_1-2)\psi_1$ points in $S_0$.  Remove these points from $S_0$, and these $y_0$-values from $T_0$. Note that by design, these values will not have been previously removed from $T_0$.  For each of these values of $y_0$, there are at most $\eta_j$ points in $S_j$ with these $y_0$-values, meaning at most $\eta_j$ values of $y_j$ across these points.  Remove these points from $S_j$ and these values of $y_j$ from $T_j$ for all $j$, $1\leq j\leq t$, $j\neq i$.  There are a total of $\eta_j\psi_j$ points with these values of $y_j$ in $S_j$.  By assumption, the sets $S_j$ were all large enough that this must be possible. Repeat for all $i$, $2\leq i\leq t$, building sets $F_2,\dots, F_t$. This is possible as long as no set $T_0$, $S_0$, $T_i$ or $S_i$ is empty at any point in the process. By definition, the set $T_i$ must be non-empty as long as $S_i$ is non-empty.  At each stage, we remove $(\eta_i-2)\psi_i\eta_j\psi_j$ points from each $S_i$ for $i\geq 1$ and $(\eta_j-2)\psi_j$ points from $S_0$. Since $|S_0|\geq\sum_{j=0}^t\eta_j\psi_j$ and $|S_i|\geq\sum_{i\neq j}(\eta_i-2)\psi_i \eta_j\psi_j$, we always have enough points to do this.  Thus all sets $F_i$ can be constructed this way.

    \end{proof}

    \begin{example} 
      We know from Theorem \ref{thm:ASDistance} the exact minimum distance of $C_{\mathcal{A}_{q,t},l}$ for many values of $l$.  However, we apply Theorem \label{thm:counting} to find the exact minimum distance of codes obtained from the curves of Example \ref{ex:ASExample} over an extended base field.
      In particular, consider points on the base curve, factor curves, and fiber product defined over $\FF_{q^6}=\FF_{p^{6h}}$, where for simplicity we take $h=t$ to also be the number of factor curves $\Y_{a_i}$.
      Since the curve $\mathcal{A}_{q,t}$ is maximal over $\F_{q^2}$, it is also maximal over $\F_{q^6}$ (upon consideration of the $L$-function of the curve).
      Thus we can compute that $\mathcal{A}_{q,t}$ has $p^{6h}+p^{5h}-p^{4h}+1$ points over $\mathbb{F}_{p^{6h}}$.
      Since each curve $\Y_{a_i}$ is covered by the maximal curve $\mathcal{A}_{q,t}$, $\Y_{a_i}$ is also maximal and thus has $p^{6h}+p^{4h+1}-p^{4h}+1$ points over $\mathbb{F}_{p^{6h}}$.
      Note that each $\FF_{p^{6h}}$-point corresponds to a place of degree 1 in the function field $\FF_{p^{6h}}(\Y_{a_i})$.

      First we consider the lower Artin--Schreier extensions $\FF_{p^{6h}}(\Y_{a_i})/\FF_{p^{6h}}(y_0)$, corresponding to the maps $h_i$ from the curves $\Y_{a_i}$ to projective line by projection onto the $y_0$-coordinate.
      These Artin--Schreier extensions of the projective line are described completely in 
      \cite[3.7.8 and 6.4.1]{stichtenoth}.
      The extensions are Galois of degree $p$.
      Each degree-one place in $\FF_{p^{6h}}(\Y_{a_i})$ lies above a fully ramified or fully split place in $\FF_{p^{6h}}(y_0)$.
      The only ramified place in this extension is the unique place at infinity.
      Thus the $p^{6h}+p^{4h+1}-p^{4h}$ affine rational points of $\Y_{a_i}$ over $\FF_{p^{6h}}$ arise from $p^{6h-1}+p^{4h}-p^{4h-1}$ places in $\FF_{p^{6h}}(y_0)$ splitting completely.  

      Recall that the function field of the fiber product of curves is the compositum of the function fields of the curves.
      Extending \cite[Proposition 3.9.6]{stichtenoth}, we see that if a place of $\FF_{p^{6h}}(x)$ splits completely in each extension $\FF_{p^{6h}}(\Y_{a_i})/\FF_{p^{6h}}(y_0)$, then this place splits completely in the compositum extension $\FF_{p^{6h}}(\mathcal{A}_{q,t})/\FF_{p^{6h}}(y_0)$.
      Since all non-infinite degree-one places of $\FF_{p^{6h}}(\mathcal{A}_{q,t})$ must lie above non-infinite degree-one places of $\FF_{p^{6h}}(\Y_{i})$, we have that all the non-infinite degree-one places of $\FF_{p^{6h}}(\mathcal{A}_{q,t})$ lie above places of $\FF_{p^{6h}}(y_0)$ which split completely in the degree $p^h$ extension $\FF_{p^{6h}}(\mathcal{A}_{q,t})/\FF_{p^{6h}}(y_0)$.
      Since there are $p^{6h}+p^{5h}-p^{4h}$ non-infinite degree-one places of $\FF_{p^{6h}}(\mathcal{A}_{q,t})$, these lie above $p^{5h}+p^{4h}-p^{3h}$ non-infinite degree-one places of $\FF_{p^{6h}}(y_0)$ which split fully in all extensions. 

      Applying the construction from Section \ref{sec:CodeConstruction}, we may take the evaluation set $B$ to be the set of all affine points of $\mathcal{A}_{q,t}(\FF_{p^{6h}})$ and the divisor $D=l\infty_{\mathcal{Y}}$ for any $l$ with $l\leq p^{5h}+p^{4h}-p^{3h}-1$ for guaranteed positive minimum distance.
      By Theorem \ref{thm:construction}, we get a locally recoverable code with uniform locality $p-1$, availability $h$, length $n=p^{6h}+p^{5h}-p^{4h}$, dimension $l(p-1)^h$, and minimum distance $d \geq n - lp^{h} - h(p-1)p^{2h-2}(p^{h}+1)$.

      Let $S_0$ be the points of $\Y$ corresponding to these fully split places below $B$.
      Let $S_i$ be the points on $\Y_{a_i}$ lying above $S_0$ for each $i$.
      We then have that $|S_i|=p|S_0|=p^{5h+1}+p^{4h+1}-p^{3h+1}$.

      To apply Theorem \ref{thm:counting}, we note that $\eta_0=1$, $\psi_0=l$, $\eta_i=p$ and $\psi_i=p^h+1$.
      Then
        \[\sum_{i\neq j}(\eta_i-2)\psi_i \eta_j\psi_j\leq(t-1)(p^{2h+2}+2p^{h+2}+p^2)+l(p^{h+1}+p)\]
      for all $1\leq i\leq t$ and 
        \[\sum_{j=0}^t\eta_j\psi_j=(t-1)(p^{h+1}+p)+l.\]
      Note that there is a large range of values of $l$ for which the conditions of Theorem \ref{thm:counting} hold, and thus for which the bound on minimum distance given in Theorem \ref{thm:construction} is the true minimum distance.
    \end{example}

\section{Parameter Ranges and Comparison to Bounds}\label{SectionBounds}

  We find the following bounds on the parameters of LRC($t$)s in the literature.  
  \begin{itemize}
    \item For all codes, the Singleton bound:
    \begin{equation*}
    d\leq n-k+1.
\end{equation*}

    \item Tamo and Barg proven rate bound \cite{TamoBarg} (2014), for codes with uniform locality $r$:
    \begin{equation}\label{eqn:TamoBargBounds}
    d\leq n-\sum_{i=0}^t\left\lfloor\frac{k-1}{r^i}\right\rfloor, \hspace{.5in} R \leq \frac{1}{\prod_{j=1}^{t}\left(1+\frac{1}{jr}\right)}.
    \end{equation}

    \item Bhadane and Thangaraj \cite{BhadaneThangaraj} (2017), for codes with locality $(r_1, r_2,\dots, r_t)$, where $r_i\leq r_j$ for $i<j$:
    \begin{equation}\label{BTBound}
    d\leq n-k+1-\sum_{i=1}^{t}\left\lfloor \frac{k-1}{\prod_{j=1}^{i}r_j}\right\rfloor.
    \end{equation}



  \end{itemize}

 The proven rate bound in \eqref{eqn:TamoBargBounds} is known to be tight for $t=1$ but no constructions have realized this bound for $t\geq 2$.  Two constructions for general $r$ and $t$ should be mentioned here.  First, in \cite{TamoBarg}, Tamo and Barg consider a binary code which is the product of $t$ single-parity-check codes with $r$ message symbols each. This gives an LRC($t$) with locality $r$ for each recovery set for arbitrary $r$ and $t$. This product code construction gives an $[(r+1)^t,r^t,2^t]$-code with rate $R=\left(\frac{r}{r+1}\right)^t$. At the time, the authors stated that they believed this to be the largest rate attainable for a code with $t$ disjoint recovery sets, each with locality $r$.  This is very close to the bound rate from \eqref{eqn:TamoBargBounds} when $t=2$ but diverges from the bound for larger $t$. Second, in \cite{WangZhangLiu}, Wang et al. devise a parity check matrix construction giving rise to LRC($t$)s with rate $R=\frac{r}{r+t}$ for arbitrary $r$ and $t$.  These $\left [\binom{r+t}{t},\binom{r+t}{t}-\binom{r+t-1}{t-1}, t+1\right ]$-codes have better rate than product codes but even smaller minimum distance. The authors state that they believe their construction yields optimal rate for $t\leq r$.  Our literature search has not found any locally recoverable codes with $t\geq 2$ surpassing this rate. In what follows, we compare the rates and minimum distance of our most general example to these benchmarks.  In some cases we also compute the relative defect. 

\begin{remark}
In \cite{Bartoli2020}, Bartoli, Montanucci, and Quoos prove that codes with locality $(r_1, r_2,\dots, r_t)$ satisfy
  \begin{equation}\label{BMQBound}
  d\leq n-k-\left\lceil\frac{(k-1)t+1}{1+\sum_{i=1}^t r_i}\right\rceil + 2.
  \end{equation} 
  In all situations of this paper where this bound applies, we find that the bound in \cite{BhadaneThangaraj} is lower, so we compare to \eqref{BTBound} in what follows.

\end{remark}

\begin{remark}Many more complicated bounds have been proven for minimum distance, many incorporating field size.  See \cite{TBF16}, for example, and the survey \cite{Balaji2018} for a more comprehensive list of proven bounds. Further, some interesting constructions have been proven rate-optimal in specific cases, or to surpass the rate of the Wang et al. construction in \cite{WangZhangLiu} for certain $r$ and $t$ (for example \cite{KadheCalderbank} and binary simplex codes).
\end{remark}

This section attempts to shed some light on how different choices in code construction affect the parameters of the resulting codes, what parameters are attainable, and how these parameters compare to bounds and constructions in the literature.

\subsection{General Heuristics}
First, we consider the general code $C=C(V,B)$ with parameters described in Theorem \ref{thm:construction}.  Recall that $l$ is the degree of a divisor $D$ on $\Y(\FF_q)$, and $m$ is the dimension of the Riemann-Roch space $\mathcal{L}(D)$. For a fixed evaluation set $B$, it is clear that the dimension of $C$ increases (and the minimum distance of $C$ decreases) as $l$ increases up to its maximal value. The Riemann-Roch Theorem states that for a curve $\mathcal{Y}$ of genus $\gamma$, we have $m\geq l-\gamma+1$. The relationship of $m$ and $l$ depends on $D$ when $l< 2\gamma-1$, but when $l\geq 2\gamma-1$, we know that $m= l-\gamma+1$. Thus if all other parameters are fixed, the value of $m$ and therefore the dimension of $C$ will potentially be larger when $\gamma$ is smaller.  In our examples, we take $\Y=\PP^1$, so $\gamma=0$ and $m=l+1$.  Of course, $l$ and therefore $m$ are bounded by the number of points in $S$, and increasing the genus of $(\Y)$ can allow a larger number of points in $S$ by the Hasse-Weil bound.  Since $n=\left\vert B\right\vert=d_g\left\vert S \right\vert$, we may attain longer codes if $S$ is larger.  If all other parameters are fixed, this will decrease the rate but increase minimum distance.

Considering rate, we observe the following,
\begin{corollary}\label{cor:RatePC}
In the setting of Theorem~\ref{thm:construction}, if 
\begin{itemize}
\item the extensions $\FF_q(\Y_i)/\FF_q(\Y)$ are linearly disjoint, and
\item $d_{h_i}=d_{h_j}=r+1$ for all $i,j$,
\end{itemize}
then the rate $R$ satisfies
\[R\leq\left(\frac{r}{r+1}\right)^t.\]  
\end{corollary}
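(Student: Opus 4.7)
The plan is to apply Corollary \ref{cor:rate} and then show that the remaining factor $m/|S|$ is at most $1$. Setting $d_{h_i}=r+1$ for all $i$ in \eqref{eqn:LinDisRate} immediately yields
\[
R = \frac{m}{|S|}\prod_{i=1}^t\frac{r}{r+1} = \frac{m}{|S|}\left(\frac{r}{r+1}\right)^t,
\]
so the corollary is equivalent to the inequality $m \leq |S|$.

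To prove $m \leq |S|$, I would combine two standard facts. First, the construction in Section \ref{sec:CodeConstruction} explicitly requires $l < |S|$, so it suffices to establish the Riemann--Roch-type bound $m = \dim \mathcal{L}(D) \leq l+1 = \deg(D)+1$ for the effective divisor $D$. This bound can be obtained by a short induction on $\deg(D)$: starting from $\mathcal{L}(0) = \F_q$ (dimension $1$), for any place $P$ of $\Y$ one has the evaluation-of-leading-coefficient map $\mathcal{L}(D+P) \to \F_q$ whose kernel is $\mathcal{L}(D)$, giving $\ell(D+P) \leq \ell(D)+1$; iterating writes the effective divisor $D$ as a sum of places and yields $\ell(D) \leq \deg(D)+1$. (Alternatively, invoke Riemann--Roch: $\ell(D) = \deg(D) - g + 1 + \ell(K-D)$ together with $\ell(K-D) \leq g$ when $\deg D \geq 0$.) Combining $m \leq l+1 \leq |S|$ completes the proof.

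There is no substantial obstacle here; the argument is essentially a bookkeeping step once Corollary \ref{cor:rate} is in hand. The only place to be slightly careful is making sure the Riemann--Roch bound $\ell(D) \leq \deg(D)+1$ is stated for all effective divisors on $\Y$ (including cases where $\Y$ has positive genus), and that the strict inequality $l < |S|$ from the construction implies $l+1 \leq |S|$. It is worth remarking that the bound $m \leq l+1$ is attained with equality precisely when $\Y = \PP^1$, matching the choice made in all featured examples and explaining why fiber product codes approach but do not exceed the product code rate $(r/(r+1))^t$ of \cite{TamoBarg}.
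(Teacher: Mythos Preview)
Your proposal is correct and takes essentially the same approach as the paper: reduce via Corollary~\ref{cor:rate} to the inequality $m\leq |S|$, then combine the Riemann--Roch bound $m\leq l+1$ with the construction's requirement $l<|S|$. The paper's proof is terser (it simply quotes $l+1\geq m\geq l-\gamma+1$ from Riemann--Roch and the hypothesis $l<|S|$), while you spell out the elementary induction behind $\ell(D)\leq \deg(D)+1$, but the logic is identical.
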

\begin{proof}
If $\gamma$ is the genus of the curve $\Y$, then the Riemann-Roch theorem implies that 
\[l+1\geq m\geq l-\gamma+1.
\]
Since the construction demands $l<\left\vert S\right\vert$ so that the evaluation map is injective, $m\leq \left\vert S \right\vert$.
\end{proof}
Therefore we see that when the fiber product construction is applied to yield codes with uniform locality $r$, it is not possible to create codes with rate surpassing that of the product code construction for the same availability and locality. The fiber product code construction is flexible, however, to allow for codes with larger minimum distance and to create varying locality across the recovery sets. 

In choosing curves $\Y_i$ and maps $h_i:\Y_i\to \Y$ for the fiber product, we know that the locality of $C$ will be determined by $d_{h_i}$.  All other things being equal, we should prefer small locality.  However, as we see in the formulas for parameters and the bounds above, this comes at a cost in rate and minimum distance.  Thus small locality must be balanced against efficiency and effectiveness in the code.  If the code is only to be used for local recovery, with global error correction never applied, large minimum distance is not useful, so we may wish to maximize rate given certain locality and availability conditions.  However, in some situations it may be that relatively large minimum distance is desirable to recover from a catastrophic event by global error correction or erasure repair.  Thus larger minimum distance may sometimes be desirable; in this case, larger relative minimum distance can be obtained by reducing the parameter $l$ to the minimum value of 0.

\subsection{LRC(2)s $C_{\HH_q}$ on $\HH_q$}

We return to the codes defined over $\F_{q^2}$ of Example \ref{ex:HermitianCurve} and Theorem \ref{thm:HermitianLRC2Dist}.  All parameters of these codes are dependent on the choice of $q$. We determined the parameters of $C_{\HH_q}$ for $q=p^h$ when $q\in\{2,3,5,7\}$ and $h\in\{1,2,3,4\}$. We compare to bounds on the minimum distance from \eqref{BTBound} as well as the relative defect from this bound.  This data is displayed in Table \ref{tab:HermitianTable}.

\begin{table}[ht]
\centering
\begin{footnotesize}
\begin{tabular}{lcccccc}
$q^2$ &($r_1,r_2$) &$n$ &$k$& $d$& upper bound on $d$ \eqref{BTBound}& $\frac{\textrm{bound}-d}{n}$\\
\hline
4 & (1, 2) & 6 & 2 & 4 & 4 & 0.0 \\
16 & (3, 4) & 60 & 12 & 38 & 46 & 0.1333 \\
64 & (7, 8) & 504 & 56 & 394 & 442 & 0.0952 \\
256 & (15, 16) & 4080 & 240 & 3602 & 3826 & 0.0549 \\
\hline
9 & (2, 3) & 24 & 6 & 14 & 17 & 0.1250 \\
81 & (8, 9) & 720 & 72 & 578 & 641 & 0.0875 \\
729 & (26, 27) & 19656 & 702 & 18254 & 18929 & 0.0343 \\
6561 & (80, 81) & 531360 & 6480 & 518402 & 524801 & 0.0120 \\
\hline
25 & (4, 5) & 120 & 20 & 82 & 97 & 0.1250 \\
625 & (24, 25) & 15600 & 600 & 14402 & 14977 & 0.0369 \\
15625 & (124, 125) & 1953000 & 15500 & 1922002 & 1937377 & 0.0079 \\
390625 & (624, 625) & 244140000 & 390000 & 243360002 & 243749377 & 0.0016 \\
\hline
49 & (6, 7) & 336 & 42 & 254 & 289 & 0.1042 \\
2401 & (48, 49) & 117600 & 2352 & 112898 & 115201 & 0.0196 \\
117649 & (342, 343) & 40353264 & 117306 & 40118654 & 40235617 & 0.0029 \\
5764801 & (2400, 2401) & 13841284800 & 5762400 & 13829760002 & 13835520001 & 0.0004 \\
\hline\\
\end{tabular}
\end{footnotesize}
\caption{Sample parameters for $C_{\mathcal{H}_q}$, an LRC($2$) over $\mathbb{F}_{q^2}$ with localities $r_1=q-1$ and $r_2=q$. The maximum possible minimum distance and relative defect from \eqref{BTBound} are also listed.}
\label{tab:HermitianTable}
\end{table}

We can also compute a formula for the defect and relative defect of $C_{\HH_q}$. Recall that $C_{\HH_q}$ is a $[q^3-q^2,q^2-2,q^3-2q^2+q+2]$-code with availability $2$ and locality $(q-1,q)$. Taking the bound \eqref{BTBound}, we find that $d\leq b$, where 
\[b=q^3-q-q^2+q+1-\left\lfloor\frac{q^2-q-1}{q-1} \right\rfloor-\left\lfloor\frac{q^2-q-1}{q^2-q} \right\rfloor=q^3-q^2-q+2.\]

Thus we can compute the defect to be $q^2-2q$ with a relative defect of $\frac{q^2-2q}{q^3-q}$, which approaches 0 as $q$ increases.  Thus these codes on the Hermitian curve have asymptotically good minimum distance.

\subsection{LRC(2) $C_{\X_q,l}$ on $\X_q=\HH_q\times_{\PP^1}\HH_q$}

For the codes $C_{\X_q,l}$ defined over $\F_{q^2}$ of Example \ref{ex:HermitianDistance2} and Theorem \ref{thm:THC Code}, we see that all parameters of these codes are also dependent on the choice of $q$.
%
%
In Table \ref{tab:my-Hqtableqs} we compare the parameters for $C_{l}$ over $\FF_{q^2}$ for different values of $q$. 

\begin{table}[ht]
\centering
\begin{tabular}{ccccccc}
$q$& $n$ & $l$ & $k$ &$d$ & bound from \eqref{BTBound} 
&$\frac{\textrm{bound}-d}{n}$\\
\hline
$4$ & $240$  & 0   & 12             & 142 & 226 
&0.35\\
& & 1   & 24  &         122 & 209       
&0.3625\\
& &2 & 36     &         102 & 192       
&0.375\\
& &3 & 48     &             82 & 175    
&  0.3875\\
& & 4 & 60    &           62   & 158       
&0.4\\
\hline
$5$ & $600$ & $0$ & 20   & 392  &     577                    
& 0.3083\\
& & 3   & 80  &        302                 & 499 
&0.3283\\
& &5 & 120     &         242              & 447 
&0.3416\\
\hline
7& 2352 & 0 & 42    &           1738            & 2305   
& 0.2410 \\
& &  3 & 168   &          1570         & 2155  
& 0.2487\\
& &  7 & 336   &          1346         & 1955   
&0.2589\\
\hline
$11$ & $14520$ & $0$ & 110   & 12014  &     14401           
& 0.1643 \\
& & 5   & 660  &      11354                 & 13791 
& 0.1678    \\
& &11 & 1320 &         10562              & 13059 
&0.1719  \\
\hline
13& 28392 & 0 & 156    &           24208        & 28225  
&0.1414 \\
& &  13 & 2184   &         21842         & 26015   
&0.1469 \\
\hline
\end{tabular}
\caption{Sample parameters for $C_{l}$, an LRC($2$) over $\mathbb{F}_{q^2}$ with locality $(r_1,r_2)=(q-1,q)$. 
}
\label{tab:my-Hqtableqs}
\end{table}

For $l=0$, the dimension of these codes is $k=q^2-q$ and the minimun distance $d=q^4-2q^3+3q+2$. Taking the bound from \eqref{BTBound}, we find that $d\leq q^4-2q^2+2$, and the relative defect is $\frac{2(q^2-q-1)}{q^3-q^2}$, which also approaches 0 as $q$ increases.  

\subsection{Parameters for LRC($t$) $C_{\mathcal{A}_{q,t},l}$ on Fiber Product of Artin--Schreier curves }

Here, we explore the parameter space of codes $C_{\mathcal{A}_{q,t},l}$ on the product of $t$ Artin--Schreier curves with points over $\F_{q^2}$ and $l$ the maximum degree in $y_0$ of functions leading to codewords. This family of codes is chosen for exploration because it can attain arbitrarily large availability $t$ (if extension degree of field of definition over prime field is allowed to increase) and arbitrary large locality ($r=p-1$ for any prime $p$).  This example family is not as general with regard to locality and availability as the product code and Wang et al. constructions, which allow for any $r$ and $t$ without increasing field size, but it is more general than many other concrete geometric constructions.  This example is not claimed to be optimal for the fiber product construction, only sufficiently adaptable to study parameters.

\subsubsection{Smallest Concrete Examples with $t=2$} The smallest non-trivial example in this case is a code of length 729 over the field $\FF_{81}$.  When $p$ is prime and $q=p^h$, the smallest $p$ allowing to non-constant functions in each $y_i$ with $1\leq i\leq t$ is $p=3$.  Since $t\leq h$, the smallest $h$ which allows multiple recovery sets is $h=2$.  Thus we may choose $l$ with $0\leq l \leq 74$ to be sure of positive minimum distance.  Theorem \ref{thm:ASDistance} gives the exact minimum distance for $0\leq l \leq 60$. In Table \ref{tab:AS p=3 t=2}, we give the parameters for the cases $l=0$, $l=60$, and $l=74$ and compare to the minimum distance bound in \cite{TamoBarg}.

\begin{table}[ht]
\centering
\begin{tabular}{ccccc}
$l$ & $k$ & rate & $d$ & bound on $d$ \eqref{eqn:TamoBargBounds} \\
\hline
0   & 4   & 0.006                            & 669 & 725              \\
60  & 244 & 0.334                            & 129 & 305              \\
74  & 300 & 0.412                            & 3*   & 207             
\end{tabular}
\caption{Sample parameters for $C_{\mathcal{A}_{q,t},l}$, an LRC($2$) over $\mathbb{F}_{81}$ with length $n=729$ and locality $r_1=r_2=2$.  The rate is bounded by $R\leq 0.533$. The listed distance when $l=74$ (marked with *) is  a lower bound from Theorem \ref{thm:ASCode} for the true minimum distance.}
\label{tab:AS p=3 t=2}
\end{table}

Larger codes can be created by increasing $p$, $h$, and/or $t$.  Letting $p=5$ and $h=t=2$ we obtain codes over $\FF_{625}$ with length $15625$. Here we may choose $l$ with $0\leq l \leq 593$ to be sure of positive minimum distance.  Theorem \ref{thm:ASDistance} gives the exact minimum distance for $0\leq l \leq 572$. In Table \ref{tab:AS p=5 t=2}, we give the parameters for the cases $l=0$, $l=572$, and $l=593$ and compare to the bounds in \eqref{eqn:TamoBargBounds}.

\begin{table}[ht]
\centering
\begin{tabular}{ccccc}
$l$ & $k$  & rate & $d$   & bound on $d$ \eqref{eqn:TamoBargBounds}\\
\hline
0   & 16   & 0.001         & 14845 & 15607            \\
572 & 9168 & 0.587         & 545   & 3595             \\
593 & 9504 & 0.608         & 20*    & 3154           
\end{tabular}
\caption{Sample parameters for $C_{\mathcal{A}_{q,t},l}$, an LRC($2$) over $\mathbb{F}_{625}$ with length $n=15625$ and locality $r_1=r_2=4$.  The rate is bounded by $R\leq 0.711$. The listed distance when $l=593$ (marked with *) is  a lower bound from Theorem~\ref{thm:ASCode} for the true minimum distance.}
\label{tab:AS p=5 t=2}
\end{table}

\subsubsection{Maximizing Rate for $t\geq 2$} As mentioned above, it may be of greatest interest to maximize the rate of LRC($t$)s for a given availability and locality.  
For $p=3, 5, 7$, we construct codes with $t=2, 3, 4$ over field $\F_{q^2}$ where $q=p^t$, the minimum field extension allowing $t$ recovery sets in this method.  
To obtain codes of large rate, we choose the maximum $l$ so that the minimum distance is guaranteed to be positive. 
The parameters of the codes arising from these choices are given in Table \ref{tab:AS p=3,5,7 t=2,3,4}.  In each case, we give a range where the minimum distance $d$ must lie based on the lower bound from Theorem \ref{thm:ASCode} and an upper bound from Theorem \ref{thm:ASDistance}.

\begin{table}[ht]
\centering
\begin{footnotesize}
\begin{tabular}{lccccccc}
$(p,t,l)$      & $q^2$   & $r$ & $n$         & $k$        & range for $d$      & rate  & rate bound \eqref{eqn:TamoBargBounds} \\
\hline
(3,2,74)       & 81      & 2   & 729         & 300        & {[}3,129{]}        & 0.415 & 0.533                                    \\
(3,3,700)      & 729     & 2   & 19683       & 5608       & {[}27, 1539{]}     & 0.285 & 0.457                                    \\
(3,4,6451)     & 6561    & 2   & 531441      & 103232     & {[}54, 17793{]}    & 0.194 & 0.406                                    \\
\hline
(5,2,593)      & 625     & 4   & 15625       & 9504       & {[}20, 545{]}      & 0.608 & 0.711                                    \\
(5,3,15398)    & 15625   & 4   & 1953125     & 985536     & {[}25, 19025{]}    & 0.505 & 0.656                                    \\
(5,4, 389122)  & 390625  & 4   & 244140625   & 99615488   & {[}375, 626625{]}  & 0.408 & 0.618                                    \\
\hline
(7,2, 2329)    & 2401    & 6   & 117649      & 83880      & {[}28, 1449{]}     & 0.712 & 0.791                                    \\
(7,3, 116911)  & 117649  & 6   & 40353607    & 25252992   & {[}294, 101479{]}  & 0.626 & 0.750                                    \\
(7,4, 5757938) & 5764801 & 6   & 13841287201 & 7462288944 & {[}343, 6593489{]} & 0.539 & 0.720                                   
\end{tabular}
\end{footnotesize}
\caption{Sample parameters for $C_{\mathcal{A}_{q,t},l}$, an LRC($t$) over $\mathbb{F}_{q^2}$, where $q=p^t$. Locality $r=p-1$ is the same for each recovery set.  We have chosen $l$ here to maximize dimension.}
\label{tab:AS p=3,5,7 t=2,3,4}
\end{table}

\subsubsection{Maximizing Minimum Distance for $t\geq 2$} If instead it is of interest to maximize the minimum distance of LRC($t$)s for a given availability and locality, this can be done by choosing $l=0$.  For $p=3, 5, 7$, we construct codes with $t=2, 3, 4$ over field $\F_{q^2}$ where $q=p^t$, the minimum field extension allowing $t$ recovery sets in this method.  The parameters of the codes arising from these choices are given in Table \ref{tab:ASDist l=0 p=3,5,7 t=2,3,4}.  In each case, we know the exact minimum distance $d$ from Theorem \ref{thm:ASDistance}, which we compare to the bound on minimum distance from \eqref{eqn:TamoBargBounds} and compute the relative defect.  

\begin{table}[ht]
\centering
\begin{footnotesize}
\begin{tabular}{lcccccccc}

$p$ & $t$ &$r$ &$n$ &$k$ &$d$ & bound on $d$ \eqref{eqn:TamoBargBounds} & $\frac{\textrm{bound} - d}{n}$\\
\hline
3 & 2 & 2 & 729 & 4 & 669 & 725 & 0.0768 \\
3 & 3 & 2 & 19683 & 8 & 18927 & 19672 & 0.0378 \\
3 & 4 & 2 & 531441 & 16 & 522585 & 531415 & 0.0166 \\
\hline
5 & 2 & 4 & 15625 & 16 & 14845 & 15607 & 0.0488 \\
5 & 3 & 4 & 1953125 & 64 & 1924775 & 1953044 & 0.0145 \\
5 & 4 & 4 & 244140625 & 256 & 243201625 & 244140289 & 0.0038 \\
\hline
7 & 2 & 6 & 117649 & 36 & 114149 & 117609 & 0.0294 \\
7 & 3 & 6 & 40353607 & 216 & 40100767 & 40353352 & 0.0063 \\
7 & 4 & 6 & 13841287201 & 1296 & 13824809481 & 13841285651 & 0.0012 \\
\hline\\
\end{tabular}
\end{footnotesize}
\caption{Sample parameters for $C_{\mathcal{A}_{q,t},l}$, an LRC($t$) over $\mathbb{F}_{q^2}$, where $q=p^t$. Locality $r=p-1$ is the same for each recovery set.  We have chosen $l=0$ here to maximize minimum distance.}
\label{tab:ASDist l=0 p=3,5,7 t=2,3,4}
\end{table}

\subsubsection{Rate for Increasing $t$ and Fixed Locality} Let $p$ (and thus locality $r=p-1$) be fixed and set $t=h$ to maximize the number of recovery sets for each field size.  For each $t$, let $l$ take on the maximum value guaranteeing positive minimum distance in Theorem \ref{thm:ASCode}. In Figure~\ref{fig:vary_field} we graph the rates of codes $C_{\mathcal{A}_{q,t},l}$ over $\F_{q^{2t}}$ for $t\in \{2,3,\dots 10\}$, increasing field size as well as number of recovery sets. We also graph the proven and conjectured rate bounds for codes with this availability and locality from \cite{TamoBarg}. The lengths of the codes are quite large, so we omit the accompanying tables. We find that the rate of the constructed codes is close to the product code rate for all $t$, growing extremely close as $t$ increases.  When we examine the minimum distance of these codes, we find that, in all cases except $(p,t)=(3,2)$, the minimum distance of the constructed code is larger than that of the corresponding product code and the Wang et al. construction.  However, this minimum distance comes at a cost of greater length and working over a larger field.

\begin{figure}[ht]
\centering
\includegraphics[width=3in]{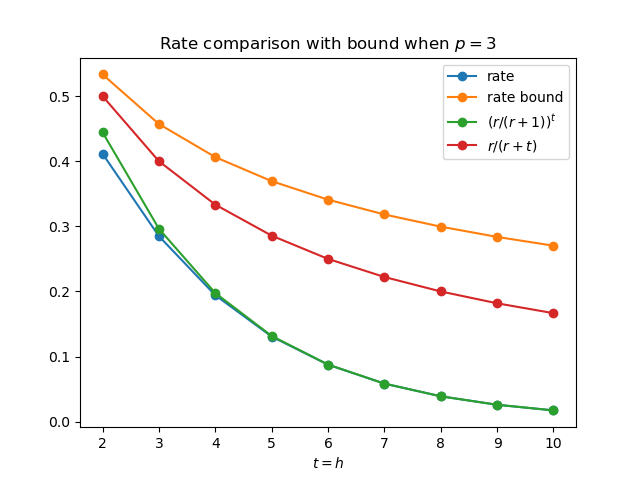}\includegraphics[width=3in]{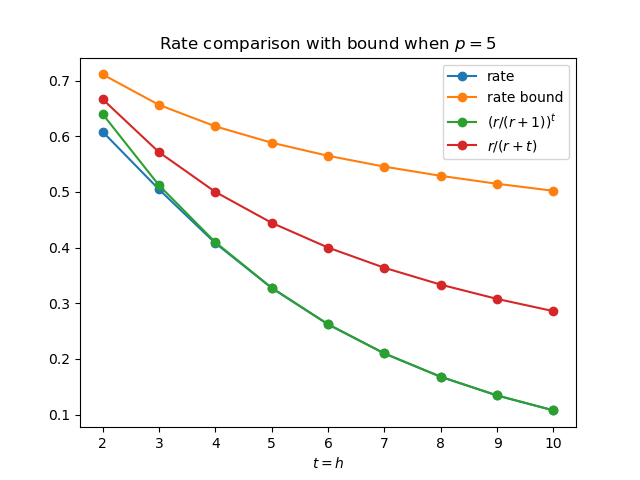}
\includegraphics[width=3in]{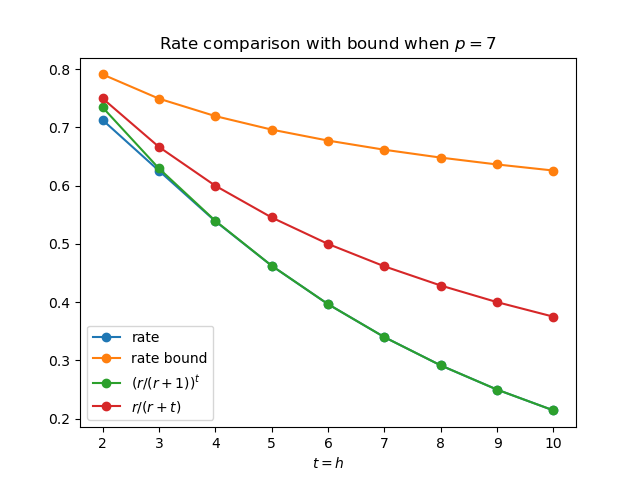}\includegraphics[width=3in]{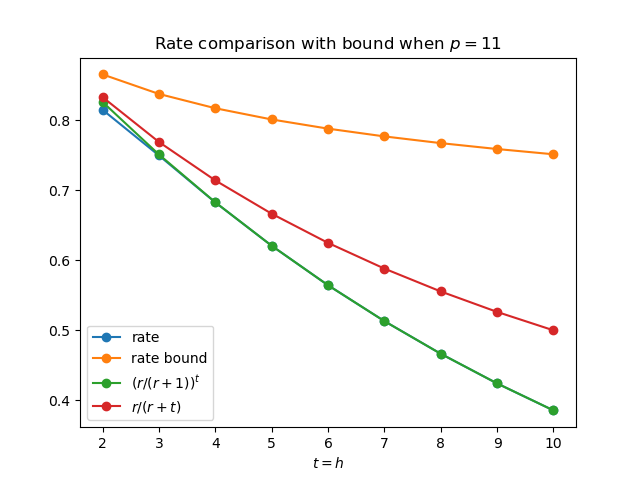}

\caption{Rate of codes $C_{\mathcal{A}_{p^{t},t},l}$, for $p=3$, $p=5$, $p=7$, and $p=11$, where $l$ is maximized for guaranteed positive minimum distance, defined over $\F_{p^{2t}}$ with $t$ recovery sets for $2\leq t\leq 10\}$. Also plotted are the bound on rate \eqref{eqn:TamoBargBounds}, and rates of product code and Wang et al. constructions (both defined over $\mathbb{F}_2$).  }
\label{fig:vary_field}
\end{figure}

One might wonder what happens when field size $q^2$ and locality $r$ are fixed, but the number $t$ of factor curves (and recovery sets) is increased.  In Figure~\ref{fig:p=3,5 t=2 to 10}, we graph the rate of the code $C_{\mathcal{A}_{p^{10},t},l}$ over $\F_{p^{20}}$, where $l$ is maximized for guaranteed positive minimum distance, as well as the proven rate bound from \eqref{eqn:TamoBargBounds} and the rates of the Tamo-Barg product construction and Wang et al. construction, for $t \in\{ 2, 3, \dots 10\}$. The lengths of the codes are quite large, so we omit the accompanying tables. We find that the rate of the fiber product codes is extremely close to the product code construction bound, matching up to at least 4 decimal places in each case.  The lower bound on minimum distances of these codes are also larger than that of the product code and Wang et al. construction when $(p,t)$ is not $(3,2)$.  This larger minimum distance comes at a cost of much greater length, however; the relative minimum distance of the fiber product codes is less than either of the other constructions.

\begin{figure}[ht]
\centering
\includegraphics[width=3in]{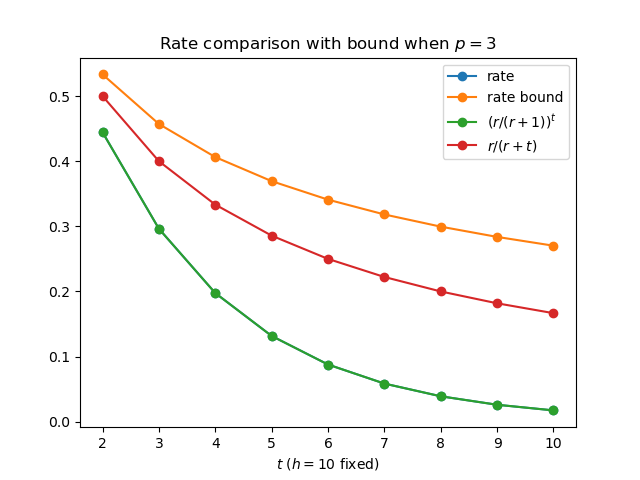}\includegraphics[width=3in]{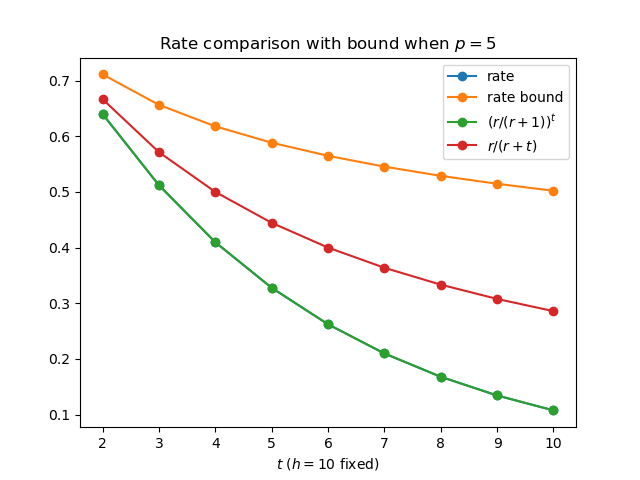}
\includegraphics[width=3in]{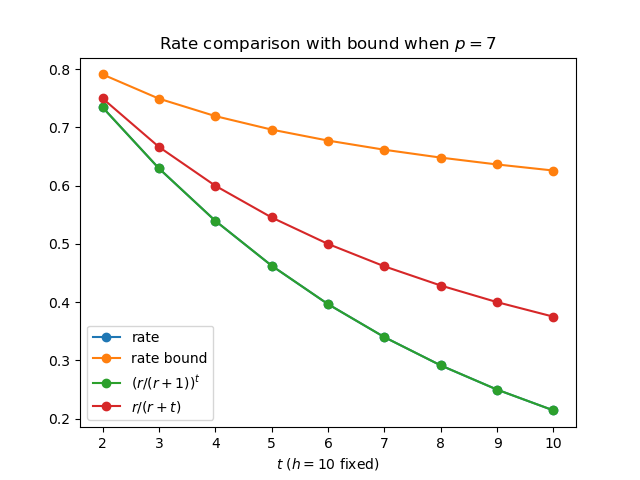}\includegraphics[width=3in]{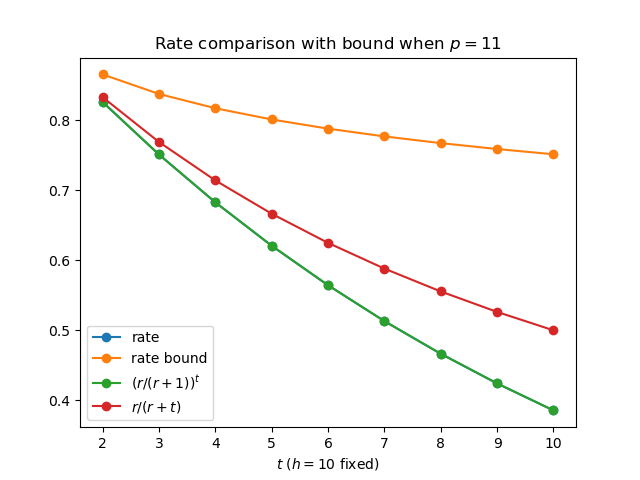}

\caption{Rate of codes $C_{\mathcal{A}_{p^{10},t},l}$, for $p=3$, $p=5$, $p=7$, and $p=11$, where $l$ is maximized for guaranteed positive minimum distance, defined over $\F_{p^{20}}$ with $t$ recovery sets for $2\leq t\leq 10\}$. Also plotted are the bound on rate \eqref{eqn:TamoBargBounds}, and rates of product code and Wang et al. constructions (both defined over $\mathbb{F}_2$). Code rate is visually indistinguishable from the rate of corresponding product code. }
\label{fig:p=3,5 t=2 to 10}
\end{figure}


\subsubsection{Rate for Fixed $t$ as Locality Increases} Finally, we consider the parameters of the codes $C_{\mathcal{A}_{p^{t},t},l}$ for various fixed values of $t$ as the prime $p$ increases and $l$ is chosen to maximize rate while guaranteeing positive minimum distance. Notice that the field size in each case is $\FF_{p^{2t}}$ so this increases with $p$ and $t$.  In Figure \ref{fig:increase_prime}

\begin{figure}[ht]\label{fig:increase_prime}
\centering
\includegraphics[width=3in]{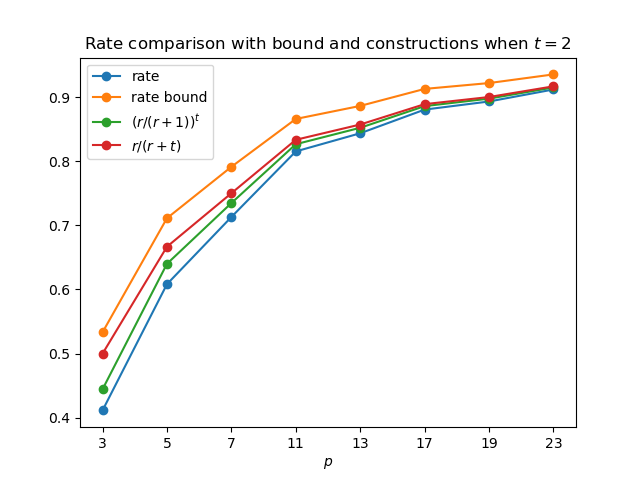}\includegraphics[width=3in]{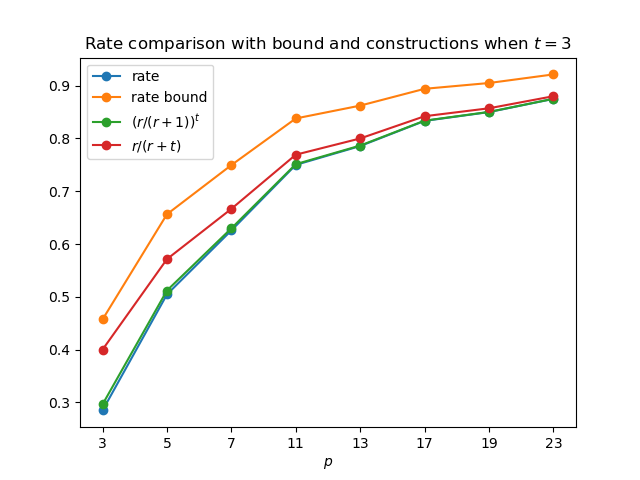}
\includegraphics[width=3in]{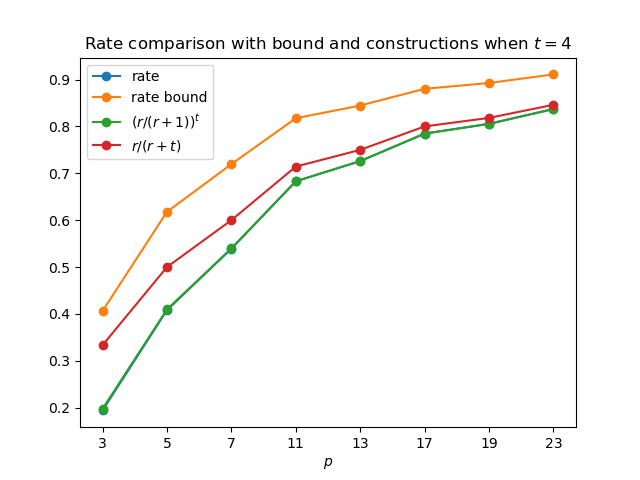}\includegraphics[width=3in]{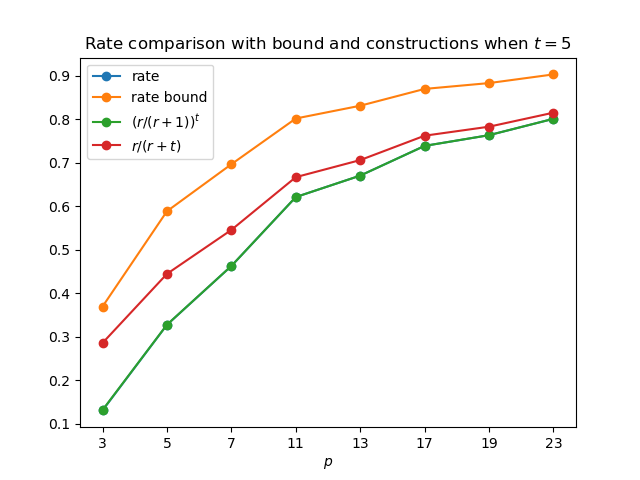}

\caption{Rate of codes $C_{\mathcal{A}_{p^{t},t},l}$, for $p\in\{3, 5, 7, 9, 11, 13, 17, 23\}$, where $l$ is maximized for guaranteed positive minimum distance, defined over $\F_{p^{2t}}$, with $t$ recovery sets for $0\leq t\leq 4$. Each recovery set has locality $p-1$. Also plotted are the bound on rate \eqref{eqn:TamoBargBounds}, and rates of product code and Wang et al. constructions (both defined over $\mathbb{F}_2$). Code rate is visually indistinguishable from the rate of corresponding product code when $t=4$ and $t=5$. }
\label{fig:p=3 to p=23}
\end{figure}

\subsubsection{Comparison with Product Code Rate as Locality Increases}

We now consider the actual formula for the rate of $C=C_{\mathcal{A}_{p^{t},t},l}$ when $l$ is chosen for maximal rate with positive minimum distance.  Recall that the length and dimension of $C$ are given by
\[k=q^2-\left\lfloor\frac{t(p^{t-1})(p-2)(p^t+1)+1}{p^t}\right\rfloor (p-1)^t,\]
\[n=p^tq^2.\]
With some simplification, this gives a rate at least
\[R\geq \frac{1}{p^{3t}}\left(p^{2t}(p-1)^t-\frac{t(p-2)(p^t+1)(p-1)^t}{p}-\left(\frac{p-1}{p}\right)^t+(p-1)^t\right).\]
As $p$ increases, we see that 
\[R\geq \left(\frac{p-1}{p}\right)^t-\mathcal{O}\left(\frac{1}{p}\right).\]
Thus at $p$ increases, the rate of $C$ approaches 1.  Further, the rate of $C$ is asymptotically the same as the rate of the corresponding product code (here, $r=p-1$, so the product code with matching locality and availability would have rate $\frac{(p-1)^t}{p^t}$).  Asymptotically, the Wang et al. construction grows at the same rate as well. Again, the product code and Wang et al. constructions both have the advantage of smaller field size.



\bibliography{bibliography}
\bibliographystyle{plain}
\end{document}